\documentclass[hidelinks, 12pt]{article}
\setlength{\parskip}{0.75em} 
\setlength{\parindent}{0em} 
\usepackage{geometry}
\usepackage{setspace}
\usepackage[utf8]{inputenc}
\usepackage{xcolor}
\usepackage{bm}
\usepackage{amsfonts}
\usepackage{amsmath}
\usepackage{amssymb}
\usepackage{amsthm}
\usepackage{graphicx}
\usepackage{xfrac}
\usepackage{multirow}
\usepackage{pgfplots}
\usepackage{tikz}
\usepackage{dcolumn}
\usepackage{threeparttable}
\usepackage{lscape}
\usepackage{tabularx}
\usepackage{booktabs}
\usepackage{caption}
\usepackage{subcaption}
\usepackage{siunitx}
\usepackage{adjustbox}
\usepackage{accents}
\usepackage{comment}
\usepackage{epstopdf}
\usepackage{hyperref}
\usepackage[noabbrev]{cleveref}
\usepackage{tasks}
\usepackage{appendix}
\usepackage{mathtools}
\usepackage{fancyhdr}
\usepackage[round]{natbib}

\usepackage [english]{babel}
\usepackage [autostyle, english = american]{csquotes}
\MakeOuterQuote{"}
\usepackage{authblk}
\usepackage{blindtext}

\usepackage{datetime}

\newdateformat{monthyeardate}{%
  \monthname[\THEMONTH], \THEYEAR}

\usepackage{float}
\usepackage{rotating}
\usepackage{enumitem,amssymb}
\usepackage{soul}
\usepackage{bbm}

\usepackage{times}

\pgfplotsset{compat=1.15}

\geometry{top=3cm, bottom=3cm, left=3cm , right=3cm}

\usepackage{footmisc}

\newtheorem{definition}{Definition}
\newtheorem{lemma}{Lemma}
\newtheorem{proposition}{Proposition}
\newtheorem{theorem}{Theorem}

\newtheorem{assumption}{Assumption}

\newtheorem{observation}{Observation}

\theoremstyle{definition}

\theoremstyle{definition}

\theoremstyle{definition}
\newtheorem*{example*}{Example}

\theoremstyle{definition}
\newtheorem*{examples*}{Examples}



\crefname{prop}{Proposition}{Propositions}
\crefname{definition}{Definition}{Definitions}
\crefname{lemma}{Lemma}{Lemmas}
\crefname{definition}{Definition}{Definitions}
\crefname{theorem}{Theorem}{Theorems}
\crefname{proposition}{Proposition}{Propositions}
\crefname{corollary}{Corollary}{Corollaries}
\crefname{assumption}{Assumption}{Assumptions}
\crefname{assumptionp}{Assumption}{Assumptions}

\crefname{claim}{Claim}{Claims}
\crefname{section}{Section}{Sections}
\crefname{figure}{Figure}{Figures}
\crefname{exmp}{Example}{Examples}
\crefname{observation}{Observation}{Observations}

\DeclareMathOperator{\E}{\mathbb{E}}
\newcommand{\reals}{\mathbb{R}}

\DeclareMathOperator*{\argmax}{arg\,max}

%

\makeatletter
\DeclareRobustCommand\citepos													
  {\begingroup\def\NAT@nmfmt##1{{\NAT@up##1's}}%
   \NAT@swafalse\let\NAT@ctype\z@\NAT@partrue
   \@ifstar{\NAT@fulltrue\NAT@citetp}{\NAT@fullfalse\NAT@citetp}}

\pretocmd{\NAT@citex}{%
  \let\NAT@hyper@\NAT@hyper@citex
  \def\NAT@postnote{#2}%
  \setcounter{NAT@total@cites}{0}%
  \setcounter{NAT@count@cites}{0}%
  \forcsvlist{\stepcounter{NAT@total@cites}\@gobble}{#3}}{}{}
\newcounter{NAT@total@cites}
\newcounter{NAT@count@cites}
\def\NAT@postnote{}

\def\NAT@hyper@citex#1{
  \stepcounter{NAT@count@cites}%
  \hyper@natlinkstart{\@citeb\@extra@b@citeb}#1%
  \ifnumequal{\value{NAT@count@cites}}{\value{NAT@total@cites}}
    {\if*\NAT@postnote*\else\NAT@cmt\NAT@postnote\global\def\NAT@postnote{}\fi}{}%
  \ifNAT@swa\else\if\relax\NAT@date\relax
  \else\NAT@@close\global\let\NAT@nm\@empty\fi\fi								
  \hyper@natlinkend}
\renewcommand\hyper@natlinkbreak[2]{#1}

\patchcmd{\NAT@citex}															
  {\ifNAT@swa\else\if*#2*\else\NAT@cmt#2\fi
   \if\relax\NAT@date\relax\else\NAT@@close\fi\fi}{}{}{}
\patchcmd{\NAT@citex}
  {\if\relax\NAT@date\relax\NAT@def@citea\else\NAT@def@citea@close\fi}
  {\if\relax\NAT@date\relax\NAT@def@citea\else\NAT@def@citea@space\fi}{}{}
\patchcmd{\NAT@cite}{\if*#3*}{\if*\NAT@postnote*}{}{}
\makeatother

\title{Selection Procedures in Competitive Admission}
\author{Nathan Hancart\thanks{
Nathan Hancart: Department of Economics, University of Oslo, Eilert Sundts hus, Moltke Moes vei 31, N-0851, Oslo, Norway, nathanha@uio.no. I thank Anna Becker, Jacopo Bizzotto, Bård Harstad and Ran Spiegler for useful comments.}}
\date{\monthyeardate\today}

\begin{document}

\maketitle

\onehalfspacing
\begin{abstract}
\noindent I study how organisations choose selection procedures in a competitive environment. Two firms compete to hire candidates of unknown productivity from a common pool. Firms simultaneously post a selection procedure which consists of a test and an acceptance probability for each test outcome. After observing the firms' selection procedures, each candidate can apply to one of them. Firms can vary both the accuracy and difficulty of their test. The firms face two key considerations when choosing their selection procedure: the statistical properties of their test and the selection into the procedure by the candidates. I show that there is a unique symmetric equilibrium where the test is maximally accurate but minimally difficult. Intuitively, competition leads to maximal but misguided learning: firms end up having precise knowledge that is not payoff-relevant. In contrast, when firms face capacity constraints or have the possibility of making a wage offer, they use more difficult tests in equilibrium. I also consider asymmetric equilibria where one firm is more selective than another.
\end{abstract}

\newpage

\section{Introduction}

An organisation's admission process, whether it is a firm hiring workers or a university admitting students, usually consists of two important parts: recruitment and selection. Recruitment is the process of attracting the most suitable pool of candidates while selection aims at identifying the best candidates from that pool. 

An important factor for candidates when deciding where to apply and spend their limited resources is the probability of getting accepted. Candidates applying to college are advised to choose where to apply according to the likelihood of getting accepted.\footnote{A common advice is to balance between `safe', `match' and `reach' colleges when deciding where to apply \citep[e.g.,][a college counselling firm]{collegewise}.} Similarly, if job applications are time-consuming, job seekers must prioritise where to apply. For example, PrepLounge, an interview preparation platform for jobs in consulting and the financial sector, argues that `with your limited time, you might be better off focusing on just a handful [of firms] to maximize the quality of your applications.' \citep{preplounge}.

If candidates decide where to apply based on their chance of getting accepted, then how organisations select candidates affects the candidates' application behaviour. In other words, when designing their selection procedures, organisations need to take into account both the statistical properties of the selection procedure and its impact on the pool of candidates it attracts. This paper studies how these two elements interact in a competitive market for admission and determine the properties of the selection procedures used in equilibrium.

There are various ways in which an organisation can vary the statistical properties of its selection procedures. One way is to vary how precise its testing is. A university could include an additional interview or a firm could require an additional test to gather more information. Another possibility is to vary what type of candidate the organisation learns about. The tests can be difficult, making them effective at identifying top candidates, or easy, making them effective at identifying poor candidates.\footnote{For example, some of UCL's undergraduate programmes changed their required admission test `to further differentiate between the many highly qualified candidates who apply [...].' \citep{UCLCS}} These two dimensions of testing can be captured by the notions of accuracy \citep{lehmann1988} and difficulty \citep{hancart2024}.

This paper develops a model for understanding how firms choose selection procedures in a competitive environment. I characterise the properties of the tests used in equilibrium and show how they depend on the characteristics of the admission procedure. I vary the model along two dimensions: whether firms face capacity constraints and whether they can make wage offers. In the baseline model, admission procedures are simple: firms only need to decide whether to accept candidates and do not face capacity constraints. For example, firms could be large colleges competing for admitting students\footnote{In the US higher education context, the importance of capacity constraints varies from institution to institution. On the one hand, in most colleges, capacity has increased faster than enrolment and only 75\% of seats where used in 2019 \citep{lundy2020}. Moreover, an important concern for many colleges is the demographic decline in incoming cohorts, with fears of an `enrolment cliff' leading to intense competition \citep{bloomberg,agb}. On the other hand, elite colleges have notoriously low capacity \citep{atlantic,hoxby2009}.} or firms in a competitive labour market where demand exceeds supply. Firms want to accept any candidate with positive productivity, while all candidates want to be accepted by any firm. I show that in any symmetric equilibrium, the test used must be maximally accurate and minimally difficult. In other words, competition leads firms to use as much information as possible but learn too precisely about poor candidates compared to what would be optimal absent competition. 

I contrast this baseline model with two modifications of the admission environment. In the first one, firms face capacity constraints. In the second, firms also make a wage offer if they accept the candidate. I show that in both these variations over the baseline model, firms use more difficult tests in equilibrium. A key mechanism for these results is how the candidates' selection into the admission procedures varies across environments.

In the baseline model, two identical firms simultaneously post a selection procedure that consists of a test and an acceptance rule. A test is a Blackwell experiment that outputs a binary signal. The firms can adjust both the accuracy and the difficulty of their test. The acceptance rule is an accept/reject decision based on the signal realisation. There is a continuum of candidates that differ in their productivity. Each candidate knows his own productivity and decides where to apply after having observed the selection procedures. Applying is costly for the candidates so after having observed the selection procedures, they apply to only one of the two firms.\footnote{The constraint follows naturally when the test is a probation period or an internship, forcing the candidate to attend only one firm. Alternatively, the constraint can come from a requirement that candidates must put some effort in the selection process and can direct that effort to at most one of the two firms.} I study symmetric subgame perfect equilibria of this game. 

To characterise the equilibrium tests, I use two natural orders on experiments: accuracy \citep{lehmann1988} and difficulty \citep{hancart2024}. Accuracy is a weaker order than \citepos{blackwell1953} informativeness order for environments satisfying monotonicity assumptions. It captures a notion of precision of tests. The difficulty order was introduced in \citet{hancart2024}. This notion captures that varying the difficulty of a test changes which types are better identified: a more difficult test is informative after a high signal, as only high types are likely to produce a high signal but it is less informative after a low signal. I assume that firms can freely vary two parameters that determine the level of accuracy and difficulty: for a fixed accuracy level, increasing the difficulty level increases the difficulty of the test and vice versa. 

\cref{theo:characterisation} shows that, in the baseline model, there is a unique symmetric equilibrium where firms use a maximally accurate but minimally difficult test among the tests that provide payoff-relevant information. In other words, in equilibrium, the firms have precise information after a low signal but the high signal contains little information compared to what would be optimal in a decision problem. 

\cref{theo:characterisation} also yields comparative statics for the equilibrium tests and acceptance probabilities of candidates. When the applicant pool becomes stronger, in the likelihood ratio order sense, the equilibrium selection procedure becomes less difficult and acceptance probabilities increase for all candidates. When the testing technology improves, e.g., more accurate screening technology becomes available, candidates are accepted with higher probability on average. In the college application context, the model predicts that better students or improved testing technology lead to \textit{lower} observed selectivity. This paper therefore offers new channels to explain the decline in selectivity in college admissions in the US \citep{hoxby2009}.

In \cref{sec_extensions}, I consider two modifications to the baseline admission environment and show how they can change the predictions of the model. I first consider the case where firms face capacity constraints. Specifically, I look at the case where the mass of candidates getting a high signal is always larger than the total capacity in the market. In \cref{prop_capacity}, I show that capacity constraints lead firms to use the most difficult test in equilibrium. 

I also use the presence of capacity constraints to explore the possibility of asymmetric equilibria. In particular, I show conditions under which a \textit{two-tier structure} can emerge in equilibrium, i.e., an equilibrium where a selective firm only attracts high types and a safe firm attracts lower types. Assuming types are binary, I show that, when there are capacity constraints, it is possible to construct a two-tier structure equilibrium for some parameter values. In that equilibrium, the selective firm chooses a test that is either more accurate or more difficult than the safe firm. Therefore, ex-ante identical firms can become ex-post differentiated endogenously through their choice of selection procedure. 

In the second variation, firms can make a wage offer and do not face capacity constraints. In this model, an admission procedure consists of a test, an acceptance rule and a wage offer conditional on the realised signal. When firms can make wage offers, they compete both by using their acceptance rule and by making wage offers. I show that the test offered in any symmetric equilibrium is maximally difficult.

Wage competition and capacity constraints show how the type of admissions procedures affects the qualitative properties of tests used in equilibrium. When firms are not constrained by capacity and have fixed wages, the test used in a symmetric equilibrium is `lemon-dropping', i.e., the test is good at identifying low types. If firms face capacity constraints or compete using wages, the model predicts that firms will use `cherry-picking' tests, i.e., tests good at identifying high types.

The key mechanism driving the results is the selection of candidates into the test. I say that selection into a test is positive if whenever one candidate prefers a test over another, then all candidates with higher productivity also prefer that test. 

To prove the characterisation in the baseline model, where there is no capacity constraint nor wage offers, I show that there is positive selection into more accurate tests and into \textit{easier} tests. In the baseline model, a Bertrand-style logic drives profits to zero in equilibrium. If  profits are zero and selection into a test is positive, any deviating firm can construct a deviation that attracts only positive types. Combined with the zero profits condition, this deviation is profitable.

Intuitively, selection into a more accurate test is positive because higher types benefit more from a more precise test. To show that firms choose a minimally difficult test, I show that there is positive selection into an easier test in equilibrium. The reason is that if firms want to attract any candidate when offering a more difficult test, they must also have  a more lenient acceptance rule, otherwise no type would ever want to deviate. Low productivity candidates benefit relatively more from a more lenient acceptance rule as they are more likely to produce a low signal, and even more so in a more difficult test. Therefore, the selection into the more difficult test is negative. These two observations imply the existence and uniqueness of a symmetric equilibrium where firms use a maximally accurate but minimally difficult test.

Capacity constraints and wage offers change the mechanics through which the selection of candidates into tests operates. Consider first the case of capacity constraints. One important observation when firms are at capacity is that they cannot benefit from undercutting their competitor. Therefore, under our assumption, firms only accept after a high signal in equilibrium. If a firm deviates to an easier test, candidates face the following trade off. In the difficult test, they have a low chance of getting the high signal but a higher probability of being accepted, if they receive the high signal. In the easy test, they have a high chance of getting the high signal but a lower probability of being accepted if they receive the high signal. Relatively high types resolve this trade off in favour of the difficult test, leading to positive selection into more difficult tests. This positive selection leads firms use the most difficult feasible test in equilibrium. 

Under wage competition, I also argue that there is positive selection into more difficult tests. When firms can make wage offers, competition is fierce not on the acceptance rule but on the wages firms offer: firms do not over-accept but they overpay. Therefore, candidates only receive an offer after a high signal. As in the case of capacity constraints, higher types are willing to `take a risk' on the more difficult test in exchange of a higher wage if accepted, leading to positive selection into a more difficult test.

Finally, I note that there is an alternative interpretation of the model as a model of competing certifiers. Most in line with the payoffs of the model would be competing programmes or majors within a university. The test corresponds to exams and the acceptance decision to whether the student completes the programme. Under this interpretation, the students are motivated by having a degree and choose majors based on their likelihood of obtaining it. For example, \citet{butcher_et_al2014} and \citet{ahn_et_al2024} show the grading policy influences the student's choice of major. The majors maximise both the quality of students certified and their quantity. The model therefore provides a framework to study how university programmes adapt their testing when facing competition from other programmes.

\subsection*{Relation to the literature}

This paper develops a model of choice of selection procedures in a competitive environment. When choosing their selection procedure, the firms must consider two key channels: the statistical properties of their test and the selection into their selection procedure. I show that accuracy and difficulty are useful orders to characterise and interpret the outcome of competition under various assumptions on the admission environment. Finally, I show that selection procedures can be a channel through which ex-ante identical firms can become ex-post differently productive.

There is a small literature that studies the design of selection procedures where strategic choices from applicants play a key role. \citet{chade_et_al2014} study a competitive markets for admission procedures in the higher education context.\footnote{There are also other models of decentralised education market with a different focus, for example \citet{che_youngwoo2016} where excess enrolment is costly for colleges or \citet{che_et_al2022} where students care about prestige.} They consider a fixed testing technology and analyse a game where universities and students make their decisions simultaneously. In this paper, I endogenise the testing technology and make it an additional instrument for competition. \citet{adda_ottaviani2024} and \citet{alonso2018} are two papers that study how changing statistical properties of tests changes candidates' application behaviour. \citet{adda_ottaviani2024} study how changing the accuracy of scientific grant evaluation, in the sense of \citet{lehmann1988}, affects application behaviour.\footnote{\citet{adda_ottaviani2024} also study competition between fields as they can adjust the accuracy of grant evaluation.} \citet{alonso2018} examines the choice of selection procedure in a labour market setting with horizontally differentiated workers and wage bargaining. In his paper, workers differ in their fit for the firms and one firm's selection procedure is fixed while the other can adjust it. Both \citet{adda_ottaviani2024} and \citet{alonso2018} only consider changes to the accuracy of the test and do not consider other statistical properties like difficulty. This paper illustrates the importance of going beyond accuracy. Here, firms endogenously choose maximally accurate tests. However, if they can also adjust the difficulty of their test, they mostly learn about non-payoff relevant information.\footnote{In \citet{hancart2024}, where the difficulty notion is introduced, a privately informed agent can choose a test from a menu of tests offered by a decision-maker. The selection into test ordered by difficulty or accuracy by the agent also plays a role for determining the equilibrium actions and which equilibrium is DM-optimal.}

Conceptually, this paper relates to the literature studying competitive markets where firms compete by offering contracts or mechanisms \citep[e.g.,][]{rothschild_stiglitz1976, peters_severinov1997, peters1997, guerrieri_et_al2010, auster_gottardi2019}. This literature typically assumes that the firms can flexibly design a mechanism  subject to incentive-compatibility constraints. Instead, in this paper, the firms have a limited set of feasible tests but they do not need to satisfy any incentive-compatibility constraints to reveal information. This approach allows me to study how the statistical properties of the tests interact with the strategic choice of the agents. It is also worth noting that in the baseline model, absent any test, the firms could not elicit information in an incentive-compatible way. Therefore, the firms need hard information through tests to inform their decision. 

Finally, there is a literature on information intermediaries that study models where certifier(s) can disclose the quality of an agent, e.g., \citet{lizzeri1999, harbaugh_rasmusen2018,asseyer_weksler2024}. In a related context to this paper, there is also a literature that models education systems as intermediaries revealing information to a receiver \citep[][]{ostrovsky_schwarz2010,boleslavsky_cotton2015,bizzotto_vigier2024}. In the case where the agent is privately informed \citep[e.g.,][]{lizzeri1999, harbaugh_rasmusen2018}, selection into the certifier also plays a key role in these papers. These information design problems are however different as the intermediaries are not trying to select candidates but to collect a fee or reveal something about them. 

\section{Model}

There are two firms and a continuum of agents with mass normalised to one. Each agent has a private type which corresponds to their value to the firms $\theta\in \Theta=[\underline{\theta},\overline{\theta}]$, noting that $\underline{\theta}$ can be negative. Types are distributed according to a continuous cdf $F$ admitting a density $f$. 

The firms can decide whether to admit agents, $a\in\{0,1\}$, and an \textsl{accuracy} and \textsl{difficulty} level, $(\sigma,d)\in \Sigma\times D\subseteq\reals^2$, that determine a binary test $\tau(\sigma,d)\in \Pi:=\{\pi:\Theta\rightarrow \Delta\{l,h\}\}$. A generic test in $\Pi$ is denoted by $t$. The conditional probabilities of test $t\in \Pi$ are denoted by $\pi_t(\cdot \vert \theta)$ and I interpret signal $h$ as the high signal. To simplify notation, I denote by $\pi_t(\theta)$ the probability that type $\theta$ sends signal $h$. Denote by $\tau(\Sigma\times D)$ the set of feasible tests, i.e., the set of tests obtained by choosing some accuracy and difficulty level. The concepts of accuracy and difficulty are defined and explained in \cref{sec_orders}.

A selection procedure is a test $t\in \tau(\Sigma\times D)$, or equivalently an accuracy and difficulty level, $(\sigma,d)\in\Sigma\times D$, and a decision rule, $\alpha:\{h,l\}\rightarrow [0,1]$, a mapping from the signal to a probability of accepting: $s=(t,\alpha)$. Let $S$ be the set of selection procedures.

The firms simultaneously post a selection procedure $s\in S$. After observing the admission process, agents decide whether to apply to firm $1$ or $2$. Denote by $\phi:\Theta\times S\times S\rightarrow [0,1]$ the probability agent $\theta$ chooses firm $1$ given the selection procedures.

The agents have payoffs $u(a)=a$, i.e., they want to be accepted. Firm $1$'s payoffs are $$v(s,s',\phi)=\int_{\Theta}\phi(s,s',\theta) \theta\Big(\pi_t(\theta)\alpha(h)+(1-\pi_t(\theta))\alpha(l)\Big)dF.$$
Firm $2$'s payoffs are defined analogously. The firms care both about how many agents they attract and their quality. It also assumes there is no capacity constraint for the firm. This captures the idea that the supply of agents is smaller than the total demand and therefore the two firms must compete to attract them. I introduce capacity constraints in \cref{sec_capacity}. 

Call $T_i=\{t\in \tau(\Sigma\times D): \int_\Theta\theta\,(1-\pi_t(\theta))dF\leq 0\leq \int_\Theta\theta\,\pi_t(\theta)dF\}$ the set of \textit{minimally informative tests}. These are all the tests that generate payoff-relevant information for the firms. 

I consider subgame-perfect equilibria of this game where agents break ties uniformly and firms use pure strategies.\footnote{I consider asymmetric equilibria in \cref{sec_asy}.}

\subsection{Feasible tests}\label{sec_orders}

In this subsection, I describe in detail the set of feasible tests, $\tau(\Sigma\times D)$. 

I maintain the following assumptions throughout the paper.

\begin{assumption}\label{assum_mon} 
For each test $t\in \tau(\Sigma\times D)$, $\pi_t(h\vert \theta)$ is increasing and $\pi_t(h\vert \theta)\in (0,1)$ for any $\theta\in (\underline{\theta},\overline{\theta})$.
\end{assumption}
\cref{assum_mon} guarantees that a high signal is good news about the type no set of types with positive measure can be identified or excluded from observing any signal. 

I make use of two partial orders on tests, accuracy and difficulty, which capture vertical and horizontal properties of tests.

\begin{definition}[Accuracy and Difficulty]\label{def:acc_diff}
    A test $t$ is more accurate than a test $t'$, $t\succeq_a t'$, if for all $\theta,\theta'\in (\underline{\theta},\overline{\theta})$ with $\theta'<\theta$,\begin{align*}
        \frac{\pi_t(h\vert\theta)}{\pi_t(h\vert\theta')}\geq\frac{\pi_{t'}(x\vert\theta)}{\pi_{t'}(x\vert\theta')}\geq\frac{\pi_t(l\vert\theta)}{\pi_t(l\vert\theta')}, \;\text{for } x=l,h.
    \end{align*}

    A test $t$ is more difficult than $t'$, $t\succeq_d {t'}$, if for all $\theta,\theta'\in (\underline{\theta},\overline{\theta})$ with $\theta'<\theta$,$$
    \frac{\pi_t(x\vert\theta)}{\pi_t(x\vert\theta')}\geq \frac{\pi_{t'}(x\vert\theta)}{\pi_{t'}(x\vert\theta')}, \;\text{for } x=l,h.
    $$
\end{definition}

Examples are provided at the end of this section.

When two tests are comparable in terms of difficulty, they are not in terms of accuracy, except in knife-edge cases where one of the likelihood ratios is constant. When the state space is binary, all tests are comparable in difficulty or accuracy. In \cref{app_cost}, I also show that if we evaluate the cost of a test using continuous, posterior-separable cost function \citep{caplin_et_al2022}, for any given test, we can find another test comparable in the difficulty order with equal cost.

\citepos{lehmann1988} accuracy order captures a notion of informativeness of a test. It allows for the comparison of more experiments than \citepos{blackwell1953} order.\footnote{\citet{lehmann1988} showed that a decision maker with monotone decision preferences \citep{karlin_rubin1956} always prefers a more accurate test. \citet{quah_strulovici2009} have extended this result to preferences that form an interval dominance order family, a generalisation of both single-crossing preferences \citep{milgrom_shannon1994} and monotone decision preferences. The term accuracy comes from \citet{persico2000}.} Accuracy is a concept defined for tests satisfying the monotone likelihood ratio property with arbitrary number of signals. However, given our focus on tests with binary signals, I give a definition that is equivalent to \citepos{lehmann1988} for binary signals. The equivalence is shown in \cref{app_prelim}.\footnote{A similar notion of informativeness also appears in \citet{deb_stewart2018} without making the connection to \citet{lehmann1988}.}

The difficulty order captures the following intuitive property of difficulty: when a test is more difficult than another, a high grade shifts beliefs more towards high types in the more difficult test. This is because it is harder to receive a high grade in a difficult test. On the other hand, after a low grade, beliefs are more pessimistic in the easier test. Indeed only bad candidates are expected to fail an easy test whereas most types fail a difficult test. This intuition is formalised in the following proposition and illustrated in \cref{fig:difficulty}.

Let $\mu(\cdot\vert t,x)$ denote the posterior beliefs in test $t$ after signal $x$ and $\succeq_{FOSD}$ the first-order stochastic dominance order.

\begin{proposition}[\citet{hancart2024}]\label{prop_charac_dif}
    A test $t$ is more difficult than $t'$ if and only if $\mu(\cdot\vert t,x)\succeq_{FOSD}\mu(\cdot\vert t',x)$ for $x=h,l$ for any prior (including non full-support).
\end{proposition}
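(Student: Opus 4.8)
The plan is to prove the two directions of the equivalence separately, working throughout with binary signals so that a posterior belief over $\Theta$ is pinned down by the likelihood ratio it induces. The key elementary fact I would isolate first is the following: for a binary test $t$, the posterior $\mu(\cdot\mid t,x)$ after signal $x$ has density (with respect to the prior) proportional to $\pi_t(x\mid\theta)$. Hence, given two priors-free tests $t$ and $d$, comparing $\mu(\cdot\mid t,x)$ and $\mu(\cdot\mid d,x)$ in the FOSD order amounts to comparing the two reweightings of a common prior by the functions $\theta\mapsto\pi_t(x\mid\theta)$ and $\theta\mapsto\pi_d(x\mid\theta)$. A standard lemma on likelihood-ratio reweighting (the ratio $\pi_t(x\mid\theta)/\pi_d(x\mid\theta)$ being monotone in $\theta$ is equivalent to the reweighted measure FOSD-dominating, for \emph{every} base measure) is exactly what links the two sides.

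For the ``only if'' direction, I would assume $t\succeq_d d$, i.e., for $x=h,l$ and $\theta>\theta'$, $\pi_t(x\mid\theta)\pi_d(x\mid\theta')\geq\pi_d(x\mid\theta)\pi_t(x\mid\theta')$. Rearranged (on the full-support region), this says $\pi_t(x\mid\theta)/\pi_d(x\mid\theta)$ is nondecreasing in $\theta$ for each fixed $x$. Fix an arbitrary prior $\nu$ on $\Theta$. Then for any increasing test function — equivalently, any upper set $U=[\theta_0,\overline\theta]$ — I would show $\mu(U\mid t,x)\geq\mu(U\mid d,x)$ by the usual monotone-ratio argument: write the difference of the two posterior probabilities of $U$ as a double integral over $U\times U^c$ of a term whose sign is controlled by $\pi_t(x\mid\theta)\pi_d(x\mid\theta')-\pi_d(x\mid\theta)\pi_t(x\mid\theta')\geq 0$ for $\theta\in U$, $\theta'\in U^c$. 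Care is needed where tests are not interior (priors without full support, or the boundary types $\underline\theta,\overline\theta$); there the inequalities in the difficulty definition are stated with products rather than ratios precisely to handle zeros, and I would check the cross-multiplied form still delivers the sign without division.

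For the ``if'' direction I would argue by contraposition: if $t$ is not more difficult than $d$, there exist $x\in\{h,l\}$ and $\theta>\theta'$ with $\pi_t(x\mid\theta)\pi_d(x\mid\theta')<\pi_d(x\mid\theta)\pi_t(x\mid\theta')$, and I would exhibit a prior (by continuity of $\pi_t,\pi_d$, a prior putting enough mass near $\theta$ and $\theta'$, e.g., a two-point prior on $\{\theta',\theta\}$ if that is permitted, or a small perturbation thereof) for which the posterior after $x$ fails FOSD — concretely, the posterior probability of the upper set $\{\theta\}$ is strictly smaller under $t$ than under $d$. This is why the statement insists ``for any prior including non-full-support'': the bad prior may be degenerate. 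The main obstacle I anticipate is not conceptual but bookkeeping: making the monotone-ratio/FOSD lemma precise simultaneously for $x=h$ and $x=l$ (the difficulty order, unlike accuracy, imposes the \emph{same} direction of inequality for both signals, which is exactly the feature that forces both posteriors to shift the same way), and handling non-interior tests and non-full-support priors uniformly. Since this proposition is quoted from \citet{hancart2024}, I would likely state it, give the reweighting lemma and the two short arguments above, and refer to that paper for the remaining routine verifications.
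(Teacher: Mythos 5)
Your argument is correct, but note that the paper itself contains no proof of this proposition---it is imported verbatim from \citet{hancart2024}, so there is nothing in the appendix to compare against. Your two steps are the standard (and right) ones: for the forward direction, the cross-multiplied inequality $\pi_t(x\mid\theta)\pi_d(x\mid\theta')\geq\pi_d(x\mid\theta)\pi_t(x\mid\theta')$ integrated over $U\times U^c$ for an upper set $U$ gives $\mu(U\mid t,x)\geq\mu(U\mid d,x)$ for any prior (modulo the convention when signal $x$ has zero probability); for the converse, the two-point prior on $\{\theta',\theta\}$ is explicitly admissible because the statement allows non-full-support priors, so no continuity or perturbation argument is needed---the posterior probability of the upper set $\{\theta\}$ immediately violates FOSD whenever the cross-product inequality fails. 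The only caution is that continuity of $\pi_t$ is not assumed in this paper, so you should lean on the degenerate prior directly rather than on the perturbation you mention as a fallback.
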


\begin{figure}
       \centering
       \begin{tikzpicture}[scale=1]

  \begin{scope}[scale = 2]

   \def\zval{3.5}

    \draw[|->, thick] (-\zval, 1) -- (\zval, 1);
    \draw[fill = black] (0, 1) circle (1pt);
    \node[below = 1ex] at (0, 1) {$\E[\theta]$};

    \draw[blue, fill=blue] (-0.75*\zval, 1) circle (1pt);
    \node[below = 1ex] at (-0.75*\zval, 1) {$\E[\theta\vert l,t']$};

\draw[blue,fill=blue] (0.25*\zval, 1) circle (1pt);
    \node[below = 1ex] at (0.25*\zval, 1) {$\E[\theta\vert h,t']$};

\draw[red,fill=red] (0.8*\zval, 1) circle (1pt);
    \node[below = 1ex] at (0.8*\zval, 1) {$\E[\theta\vert h,t]$};

\draw[red,fill=red] (-0.33*\zval, 1) circle (1pt);
    \node[below = 1ex] at (-0.33*\zval, 1) {$\E[\theta\vert l,t]$};

    \draw[->,blue,  thick] (0,1.12) arc
    [
        start angle=20,
        end angle=160,
        x radius=\zval*0.395,
        y radius =1.1
    ] ;
 
 \draw[->,blue,  thick] (0,1.12) arc
    [
        start angle=160,
        end angle=20,
        x radius=\zval*0.125,
        y radius =0.4
    ] ;

    \draw[->,red,  thick] (0,1.12) arc
    [
        start angle=20,
        end angle=160,
        x radius=\zval*0.17,
        y radius =0.4
    ] ;

     \draw[->,red,  thick] (0,1.12) arc
    [
        start angle=160,
        end angle=20,
        x radius=\zval*0.42,
        y radius =1.1
    ] ;

  \end{scope}

\end{tikzpicture}%
\caption{Illustration of posterior means for two tests, $t\succeq_d t'$. The good news signal $h$ shifts the posterior towards a higher posterior mean in the more difficult test. The bad news signal $l$ shifts the posterior towards lower posterior mean in the easier test.}
       \label{fig:difficulty}
   \end{figure}
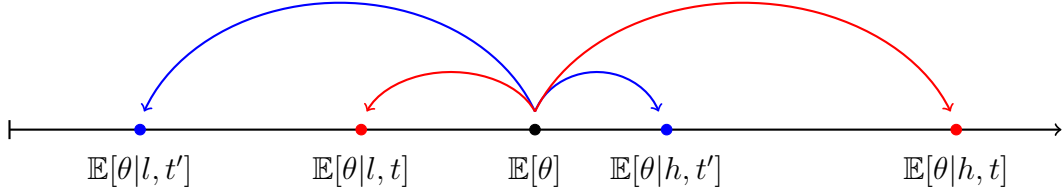 

We can now formalise the notion of accuracy and difficulty level as follows: \begin{align*}
        &\text{for all }\sigma\in \Sigma,\; \tau(\sigma,d)\succeq_d \tau(\sigma,d')\Leftrightarrow d\geq d',\\
        &\text{for all }d\in D,\; \tau(\sigma,d)\succeq_a \tau(\sigma',d)\Leftrightarrow \sigma\geq \sigma'.
    \end{align*} 

Intuitively, starting from a test $\tau(\sigma,d)$, increasing $\sigma$ increases the accuracy and increasing $d$ increases the difficulty. Note that generically, if $d\neq d'$, then $\tau(\sigma,d)$ and $\tau(\sigma,d')$ are not comparable in accuracy. Similarly if $\sigma\neq \sigma'$, then $\tau(\sigma,d)$ and $\tau(\sigma',d)$ are not necessarily comparable in difficulty.

Abusing notation, I write $\pi_{\sigma,d}(\theta)$ for the probability of sending a high signal in test $\tau(\sigma,d)$. I provide examples of tests comparable in terms of accuracy and difficulty below. 

\begin{examples*}
    For simplicity, I normalise $\Theta$ to $\Theta=[0,1]$. In each example, accuracy is increasing in $\sigma$ and difficulty in $d$.\begin{enumerate}
     \item Let $\pi_{\sigma,d}(\theta)=\sigma\pi(\theta)+(1-\sigma)(1-d)$ where $\pi:\Theta\rightarrow (0,1)$ is increasing and $\sigma,d\in[0,1]$. 
     \item Let $\pi_{\sigma,d}(\theta)=\big(\frac{1}{2}+\sigma(\theta-\frac{1}{2})\big)^d$ where $\sigma,d\in[0,1]$. 
     \item Let $\pi_{\sigma,d}(\theta)=\text{Pr}[y\geq d\vert\theta]$ where  $y=\theta+ \frac{\epsilon}{\sigma}$ and $\epsilon\sim U(-1/2,1/2)$ with $1-\frac{1}{2\sigma}\leq d\leq \frac{1}{2\sigma}$.
\end{enumerate}
\end{examples*}

The first example can be interpreted as follows. With probability $\sigma$, the test is informative and the probability of a high signal depends on $\theta$. With the complement probability, all types receive a high signal with probability $1-d$. In the second example, the test is again more sensitive to the type and difficulty reduces the probability of receiving a high signal. In the last example, there is an underlying continuous signals $y$. The test is passed when the underlying signal is above the threshold $d$. The higher the threshold, the more difficult the test. The restrictions guarantee that $\text{Pr}[y\geq d\vert\theta]$ gives an interior probability for all types.\footnote{In general, with a test $\pi_{\sigma,d}(\theta)=\text{Pr}[y\geq d\vert \theta]$ with $y=\theta+\frac{\epsilon}{\sigma}$ and $\epsilon\sim G$, the difficulty increases in $d$ when $G$ and $1-G$ are log-concave. These conditions are satisfied by many common distributions like the normal, logistic or Laplacian distributions. The accuracy does not necessarily increase with $\sigma$. If the signal $y$ becomes too precise, the signal $\pi_{\sigma,d}$ just indicates whether the type is above or below $d$. This would be a useless signal if, e.g., the prior is concentrated above $d$. In that case, a DM might prefer a test with a lower $\sigma$.}

Intuitively, accuracy increases when the probability of receiving a high signal is more sensitive to the type. Difficulty increases when obtaining the higher signal is harder for all types, but more so for lower types.

I also record the following observation.

\begin{lemma}\label{lemma_diff}
    Suppose test $t$ is more difficult than $t'$. Then for all $\theta\in (\underline{\theta},\overline{\theta})$, $\pi_t(\theta)\leq \pi_{t'}(\theta)$. 
\end{lemma}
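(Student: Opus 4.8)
The plan is to extract two ratio‑monotonicity facts from the definition of $\succeq_d$ and then combine them with the monotonicity of the tests built into Assumption \ref{assum_mon}.

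First I would put $t\succeq_d d$ in ratio form. Since both tests are interior on $(\underline\theta,\overline\theta)$, dividing the $x=h$ inequality in the definition of $t\succeq_d d$ by $\pi_d(\theta)\pi_d(\theta')>0$ and the $x=l$ inequality by $(1-\pi_d(\theta))(1-\pi_d(\theta'))>0$ shows that, for $\theta>\theta'$,
$$\frac{\pi_t(\theta)}{\pi_d(\theta)}\ \geq\ \frac{\pi_t(\theta')}{\pi_d(\theta')}\qquad\text{and}\qquad\frac{1-\pi_t(\theta)}{1-\pi_d(\theta)}\ \geq\ \frac{1-\pi_t(\theta')}{1-\pi_d(\theta')}.$$
Equivalently, $R(\theta):=\pi_t(\theta)/\pi_d(\theta)$ and $S(\theta):=(1-\pi_t(\theta))/(1-\pi_d(\theta))$ are both nondecreasing in $\theta$, and $\pi_t(\theta)\le\pi_d(\theta)$ is the same as $R(\theta)\le 1$, which is the same as $S(\theta)\ge 1$.

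Then I would argue by contradiction: suppose $\pi_t(\theta_0)>\pi_d(\theta_0)$ for some $\theta_0$. Then $R(\theta_0)>1$, so nondecreasingness of $R$ gives $\pi_t(\theta)>\pi_d(\theta)$ for all $\theta\ge\theta_0$; and $S(\theta_0)<1$, so nondecreasingness of $S$ gives $\pi_t(\theta)>\pi_d(\theta)$ for all $\theta\le\theta_0$. Hence $R>1>S$, i.e. $\pi_t>\pi_d$, everywhere on $(\underline\theta,\overline\theta)$. Now the identity $R(\theta)\pi_d(\theta)+S(\theta)\bigl(1-\pi_d(\theta)\bigr)=\pi_t(\theta)+\bigl(1-\pi_t(\theta)\bigr)=1$ can be solved for $\pi_d$, giving $\pi_d(\theta)=\bigl(1-S(\theta)\bigr)/\bigl(R(\theta)-S(\theta)\bigr)$. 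A direct computation shows that on the region $R>1>S$ the right‑hand side is strictly decreasing in each of $R$ and $S$; since $R$ and $S$ are nondecreasing in $\theta$, $\pi_d$ would then be nonincreasing in $\theta$, contradicting Assumption \ref{assum_mon} — unless $R$ and $S$ are both constant, in which case $\pi_t$ and $\pi_d$ are constant and the claim is only an issue in the degenerate uninformative‑test case, which I would handle separately. An alternative for this last step, once $\pi_t>\pi_d$ has been forced everywhere, is to invoke \cref{prop_charac_dif}: $t\succeq_d d$ makes both posteriors of $t$ first‑order stochastically dominate those of $d$, so for any prior $G$ under which $d$ is informative the law of iterated expectations yields $\int\pi_t\,dG\le\int\pi_d\,dG$, and letting $G$ concentrate on a single type while keeping $d$ informative gives $\pi_t\le\pi_d$ pointwise.

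The main obstacle is that the two pairwise inequalities defining $\succeq_d$ do not, on their own, imply the lemma: they are consistent with $\pi_t>\pi_d$, as two uninformative tests with different success probabilities show. What rules this out is precisely the monotonicity of the tests in Assumption \ref{assum_mon} — equivalently, the requirement in \cref{prop_charac_dif} that FOSD of the posteriors hold for \emph{every} prior, not just for two‑point ones. So the crux is to see that ``$R$ and $S$ nondecreasing'' together with ``$\pi_d$ nondecreasing'' are incompatible with $R>1>S$, which is exactly what the displayed formula for $\pi_d$ delivers.
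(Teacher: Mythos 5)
Your argument is correct and reaches the same contradiction as the paper's, but it is organized differently. The paper also argues by contradiction from a single point $\theta'$ with $\pi_d(\theta')<\pi_t(\theta')$: it manipulates the two product inequalities in the definition of $\succeq_d$ to sandwich the difference ratio $\frac{\pi_t(\theta)-\pi_d(\theta)}{\pi_t(\theta')-\pi_d(\theta')}$ between $\frac{\pi_d(\theta)}{\pi_d(\theta')}$ and $\frac{1-\pi_d(\theta)}{1-\pi_d(\theta')}$ for $\theta>\theta'$, and concludes $\pi_d(\theta')>\pi_d(\theta)$, contradicting \cref{assum_mon}. You instead pass to the ratio functions $R=\pi_t/\pi_d$ and $S=(1-\pi_t)/(1-\pi_d)$, propagate the violation to all of $(\underline{\theta},\overline{\theta})$, and read off from the identity $\pi_d=(1-S)/(R-S)$ that $\pi_d$ must be nonincreasing on the region $R>1>S$. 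Both routes use exactly the same inputs (the two likelihood-ratio inequalities plus monotonicity of $\pi_d$); yours buys a cleaner separation between what $\succeq_d$ says and where monotonicity enters, at the cost of an extra and strictly unnecessary step (on $[\theta_0,\overline{\theta})$ alone, $R>1$ already forces $S<1$, so downward propagation is not needed). The FOSD alternative via \cref{prop_charac_dif} is also viable but leans on an imported result and still needs the test $d$ to be informative near the point of interest.

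One point you flag deserves emphasis: the degenerate case in which $R$ and $S$ are both constant is not merely something to handle separately --- there the lemma as literally stated fails if \cref{assum_mon} is read as weak monotonicity. Two constant tests with $\pi_t\equiv 0.8$ and $\pi_d\equiv 0.5$ satisfy $t\succeq_d d$ (both defining inequalities hold with equality) and satisfy \cref{assum_mon}, yet $\pi_t>\pi_d$. The paper's own proof glosses over this by converting the weak inequality $\pi_t(\theta)\pi_d(\theta')\geq\pi_t(\theta')\pi_d(\theta)$ into a strict one after adding a term to both sides, which is unjustified when equality holds throughout. So your caveat is not a defect of your proof relative to the paper's; it is a gap the two share, and the clean fix is to exclude the jointly uninformative (constant) case or to read increasing strictly.
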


All proofs are in \cref{app_proof}.

\cref{lemma_diff} shows that the difficulty order leads to the natural property that high signals are less likely in a more difficult test.

Finally, I maintain the following assumptions throughout. 

\begin{assumption}\label{assump:technical}
    (1) The sets $\Sigma$ and $D$ are closed intervals. (2) The probability $\pi_{\sigma,d}(\theta)$ is continuous in $(\sigma,d)$ for all $\theta\in \Theta$. (3) If two tests $t,t'\in \tau(\Sigma\times D)$ are comparable in difficulty, they are not comparable in accuracy.
\end{assumption}

The first two assumptions are technical assumptions to guarantee equilibrium existence. The last one simplifies the exposition.

\subsection{Discussion}\label{sec:discussion}

\paragraph{Cost of applying} In this model, the constraint on the application strategy of the agents is that they can apply to only one firm. This constraint can be interpreted in different ways and is natural in a number of markets. The constraint can come from the nature of the test. If the test is a probation period or an internship, the test can be performed at at most one firm. In the interpretation of the model as competing majors or programmes within a university, students can participate to only one programme. Alternatively, if applying requires effort to tailor the application package or prepare for firm-specific tests, candidates need to prioritise effort towards a subset of firms. Finally, there can be institutional constraints. For example, in the UK, applicants for undergraduate programmes can apply to up to five different programmes \citep{UCAS}. 

The results would be qualitatively the same if there was a fixed cost of applying to each firm and that cost is high enough so that no agent applies to two firms. The key assumption is that agents cannot apply to all firms in the market. A key challenge when candidates can potentially apply to all firms is that application behaviour can be non-monotonic: the candidates applying to both firms may or may not be the most productive candidates. 

\paragraph{Restrictions on tests} I make two substantive assumptions on the class of feasible tests. First, all tests have binary signals. Second, from any given test, one can find a test that is more or less accurate and more or less difficult. Binary signals would be without loss of generality if firms could choose any tests in $\Pi=\{\pi:\Theta\rightarrow\Delta X\}$ for some signal space $X$ with $\vert X\vert\geq 2$. Then, standard arguments would show that $X=\{h,l\}$ is without loss of generality and that each signal is an action recommendation, i.e., $\alpha(h)=1$ and $\alpha(l)=0$.

If the design of test is fully flexible, then there is an equilibrium where both firms offer a test with $\pi_t(h\vert \theta)=1$ when $\theta\geq 0$ and $\pi_t(h\vert\theta)=0$ when $\theta<0$, i.e., the test perfectly reveals whether the type is below or above zero. This selection procedure is exactly the same one as the one a monopolist would choose. This extreme result motivates putting restrictions on the tests available to the firms.  

The restriction on the feasible tests relies on the statistical properties of the tests directly. The goal of these restrictions is to offer an interpretable framework that allows for characterisations of the equilibria and comparative statics, while at the same time not putting direct parametric assumptions on the feasible tests. When state is binary, any test is comparable in either difficulty or accuracy. In this case, the characterisation of \cref{theo:characterisation} applies to arbitrary set of feasible tests with binary signals (up to some technical assumptions for existence).  

To make progress beyond binary signals, one would need to develop a theory of difficult tests beyond binary signals. For accuracy, the analysis extends to arbitrarily many signals. Note also that in the case of wage offers, binary signals would not be without loss of generality even in the flexible test design setup.

A strength of binary signals is that all signals have an unambiguous interpretation as either a high or a low signal. This would not be the case with more signals. This is especially important when comparing selection procedures where the tests differ in their difficulty. As we will see in the analysis, the selection into a more difficult tests varies depending on whether the low signal is rewarded or not, i.e., $\alpha(l)>0$ or $\alpha(l)=0$. The clean distinction between high and low signals allows for a transparent interpretation of the selection effects into more difficult test.

\section{Analysis}

I first show that competition leads firms to `over-accept' candidates in the sense that they reward the low signal even though it has negative posterior expected productivity. In any symmetric equilibrium, the payoffs of the firms are no different from their payoffs if they could not observe any signals.

\begin{lemma}\label{prop_bertr}
    In any symmetric equilibrium $s=(t,\alpha)$, firms' payoffs are $\max\{0,\frac{1}{2}\E[\theta]\}$.
    \begin{itemize}
        \item If $\E[\theta]\geq 0$, $\alpha(h)=\alpha(l)=1$.
        \item If $\E[\theta]<0$ and $\int_\Theta \theta\pi_t(\theta)dF>0$, $\alpha(h)=1$, $\alpha(l)> 0$.
    \end{itemize}
\end{lemma}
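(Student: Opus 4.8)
The plan is to run a Bertrand-style undercutting argument, exploiting that in a symmetric equilibrium the two identical firms must split every type. First I would pin down the equilibrium payoff. In a symmetric equilibrium $s=(t,\alpha)$ every type is indifferent between the firms, so by uniform tie-breaking each firm attracts exactly half of every type; writing $A(\theta):=\pi_t(\theta)\alpha(h)+(1-\pi_t(\theta))\alpha(l)$ for the acceptance probability of type $\theta$, each firm's profit is $\Pi=\tfrac12\int_\Theta\theta\,A(\theta)\,dF$. Two preliminary observations: (i) a firm can deviate to rejecting everyone, which enrolls nobody and yields exactly $0$, so $\Pi\ge0$; and (ii) by the interiority part of \cref{assum_mon}, $\pi_t(\theta)\in(0,1)$ a.e., hence $A(\theta)<1$ for a.e.\ $\theta$ unless $\alpha(h)=\alpha(l)=1$.

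The core step is the undercutting claim: if $\Pi>0$ then $\alpha(h)=\alpha(l)=1$. Suppose $\alpha(h)<1$ and consider the deviation to $(t,\alpha')$ with $\alpha'(h)=\alpha(h)+\varepsilon$ and $\alpha'(l)=\alpha(l)$ for small $\varepsilon>0$. Since $\pi_t(\theta)>0$ a.e., this strictly raises the acceptance probability of almost every type (from $A(\theta)<1$ to $A(\theta)+\varepsilon\pi_t(\theta)$), so in the induced continuation every agent applies to the deviator; its profit is then $\int_\Theta\theta\big(A(\theta)+\varepsilon\pi_t(\theta)\big)dF=2\Pi+\varepsilon\int_\Theta\theta\,\pi_t(\theta)\,dF$, which exceeds $\Pi$ for $\varepsilon$ small because $\Pi>0$ — a contradiction. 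The symmetric perturbation $\alpha'(l)=\alpha(l)+\varepsilon$ forces $\alpha(l)=1$. Hence $\Pi>0$ implies $\alpha\equiv1$, and then $\Pi=\tfrac12\int_\Theta\theta\,dF=\tfrac12\E[\theta]$, so in particular $\E[\theta]>0$.

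I would then assemble the cases. If $\E[\theta]<0$: $\Pi>0$ is impossible (it would force $\tfrac12\E[\theta]>0$), so with $\Pi\ge0$ we get $\Pi=0=\max\{0,\tfrac12\E[\theta]\}$; if moreover $\int_\Theta\theta\pi_t(\theta)dF>0$, then $\Pi=0$ reads $\alpha(h)\int_\Theta\theta\pi_t(\theta)dF+\alpha(l)\int_\Theta\theta(1-\pi_t(\theta))dF=0$ with $\int_\Theta\theta(1-\pi_t(\theta))dF=\E[\theta]-\int_\Theta\theta\pi_t(\theta)dF<0$; were $\alpha(h)<1$, the $\varepsilon$-deviation above now gives the deviator profit $\varepsilon\int_\Theta\theta\pi_t(\theta)dF>0=\Pi$, a contradiction, so $\alpha(h)=1$ and then $\alpha(l)=\int_\Theta\theta\pi_t(\theta)dF\big/\big(-\int_\Theta\theta(1-\pi_t(\theta))dF\big)>0$. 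If $\E[\theta]>0$: if $\Pi=0$ then $\alpha\not\equiv1$ (else $\Pi=\tfrac12\E[\theta]>0$), so $A(\theta)<1$ a.e.\ and deviating to accepting everyone captures the whole market and yields $\E[\theta]>0=\Pi$, a contradiction; hence $\Pi>0$, and the undercutting claim gives $\alpha\equiv1$ and $\Pi=\tfrac12\E[\theta]=\max\{0,\tfrac12\E[\theta]\}$. The boundary case $\E[\theta]=0$ is handled analogously (using that some feasible test is informative, so that a deviation to it with a sufficiently lenient low-signal rule strictly improves on any $\alpha\not\equiv1$).

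The step I expect to be the main obstacle is making the undercutting deviation airtight: one must be sure the deviating firm really captures the \emph{entire} continuum of agents — which is exactly where interiority of $\pi_t$ enters, so that bumping $\alpha(h)$ or $\alpha(l)$ strictly helps every type — and that capturing the whole market (roughly doubling volume) outweighs the infinitesimal loss in per-capita surplus, which is why one must first establish $\Pi>0$ before perturbing $\alpha$. Cleanly disposing of the $\E[\theta]=0$ knife-edge, i.e.\ ruling out spurious equilibria with slack acceptance probabilities, is the only other point that needs slight extra care.
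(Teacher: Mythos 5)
Your proof is correct and follows essentially the same Bertrand undercutting argument as the paper, in fact handling the step "$\Pi>0$ in equilibrium when $\E[\theta]>0$" more explicitly than the paper does (the paper simply asserts positive profits before perturbing $\alpha$). The only loose end is the knife-edge $\E[\theta]=0$, where your fix invokes the existence of an informative feasible test, which is not an explicit hypothesis of the lemma — but the paper's own proof glosses over this case in the same way.
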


The intuition for \cref{prop_bertr} is the familiar Bertrand undercutting logic. If firms make positive profits, they can relax their acceptance rule and attract all candidates. If the increased probability of acceptance is small enough, the firm's payoffs are larger than when sharing the market with the other firm. Therefore in equilibrium, firms accept candidates until they make zero profits or they accept all candidates applying. This means that the firms make exactly the same profits as if they could not collect any information about candidates.

If $\E[\theta]\geq 0$, both firms accept all types and make weakly positive profits. Therefore, any test with $\alpha(h)=\alpha(l)=1$ is an equilibrium. In the rest of this section, I therefore focus on the case with $\E[\theta]<0$.

I now turn to the characterisation of the tests used in equilibrium. 

\begin{theorem}\label{theo:characterisation}
    Suppose $\E[\theta]<0$ and there is $t\in \tau(\Sigma\times D)$ such that $\int_\Theta \theta\pi_t(\theta)dF>0$. Then there is a unique symmetric equilibrium $s=\left(\tau(\sigma^*,d^*),\alpha\right)$. The accuracy and difficulty levels satisfy\begin{align*}
        &\sigma^*=\max \{\sigma\in\Sigma\},\\
        &d^*=\min\{d\in D: \int_{\Theta}\theta\pi_{\sigma^*,d}(\theta)dF\geq 0\}.
    \end{align*}
    The acceptance rule satisfies $\alpha(h)=1$ and $\alpha(l)\geq 0$.
\end{theorem}

\cref{theo:characterisation} shows that in the unique symmetric equilibrium, the test used is maximally accurate but minimally difficult, as long as it reveals some payoff-relevant information. It is illustrated in \cref{fig:two}. The selection procedures used in equilibrium are therefore suboptimal for the firms. in the sense that a monopolist acting on behalf of the two firms would choose a different strategy. In particular, it would choose a higher difficulty level and would not accept after a low signal.\footnote{Note that the monopolist would not necessarily maximise difficulty either as this could entail too few accepted candidates. For a monopolist, the choice of difficulty level trades off the expected productivity conditional on the high signal with the unconditional probability of accepting candidates.}

\begin{figure}[h!]
   
    \centering
   
    \begin{subfigure}[t]{0.47\textwidth}
    \centering
   \includegraphics[width=\textwidth]{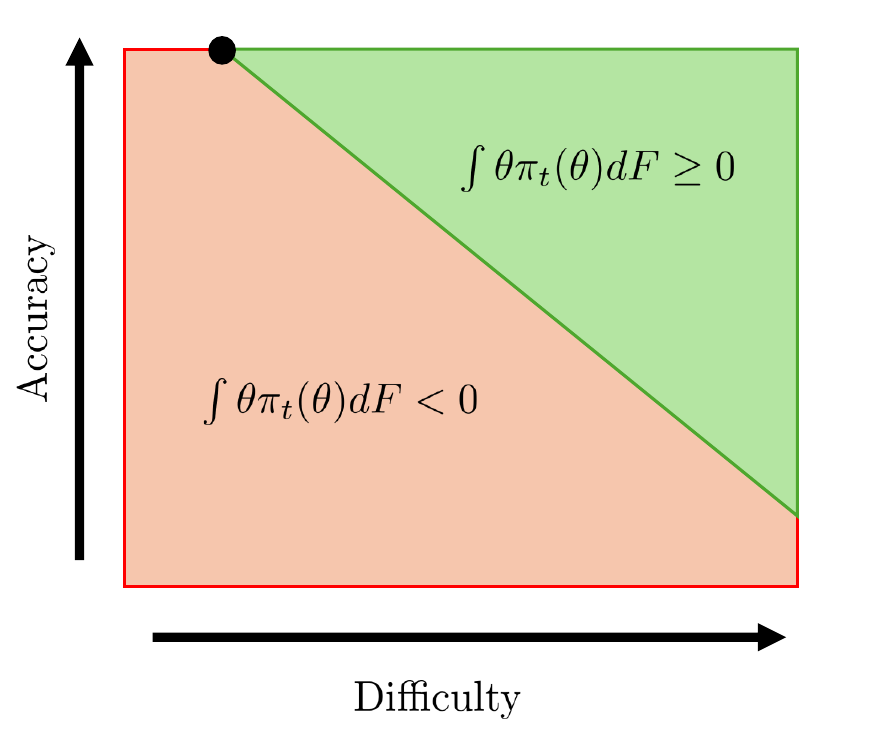}
        \caption{$\E[\theta\vert\tau(\sigma^*,d^*),h]=0$.}
    \end{subfigure}
    \hfill
    \begin{subfigure}[t]{0.47\textwidth}
        \centering
        \includegraphics[width=\textwidth]{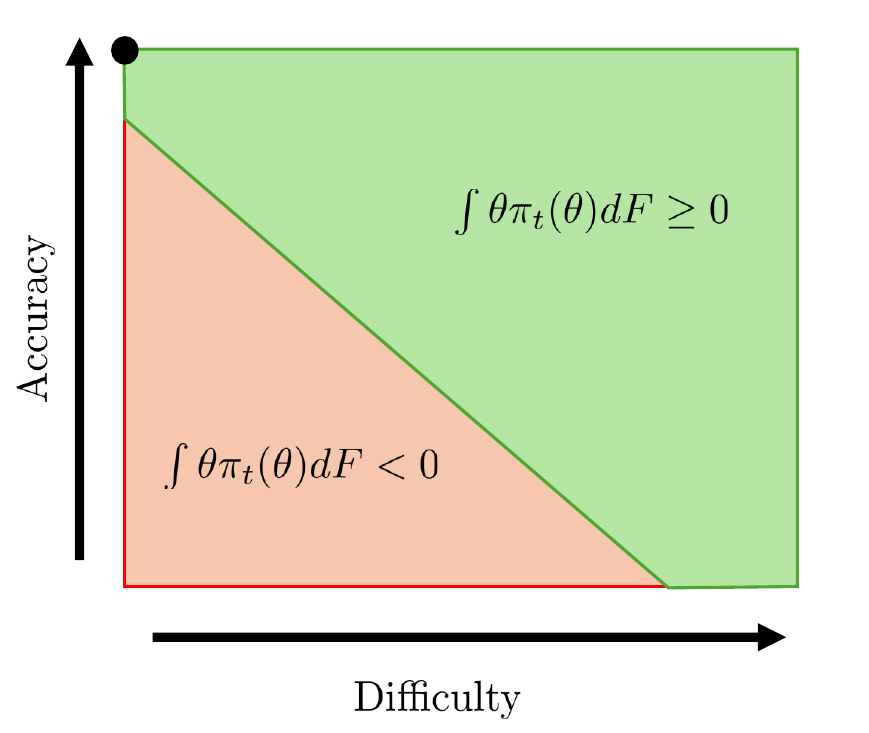}
        \caption{$\E[\theta\vert\tau(\sigma^*,d^*),h]>0$.}
    \end{subfigure}
\caption{Each point in the rectangle represents a test. A test is minimally informative if it is accurate or difficult enough. The black dot indicates the test used in symmetric equilibrium.}
\label{fig:two}
\end{figure}

If the set of feasible tests is sufficiently rich, at the equilibrium test, $\int_{\Theta}\theta\pi_{\sigma^*,d^*}(\theta)dF=0$. Therefore, in equilibrium, we have $\alpha(h)=1$ and $\alpha(l)=0$. Under that strategy, the firms best reply to their signals in equilibrium. In that case, the equilibrium we have found is also an equilibrium of a game with an alternative timing where firms do not commit to an acceptance rule but decide whether to accept only after having seen the test result. 

While accuracy and difficulty are partial orders for an arbitrary $\Theta$, they are not when the state is binary, $\vert\Theta\vert=2$. In this case, any two binary tests are comparable in difficulty or in accuracy (or both). \cref{theo:characterisation} therefore also gives a characterisation of the equilibrium tests for arbitrary restrictions on the feasible tests when the state is binary. The equilibrium test is the least difficult test among the tests that are undominated in accuracy and provide payoff-relevant information.\footnote{Binary types can be integrated directly in the formalism of this paper by having a $\pi_t(\theta)$ being constant above and below some threshold type $\theta^*$ for all tests $t$. In this case, the types below $\theta^*$ are indistinguishable and similarly for the types above $\theta^*$. Therefore, the types are effectively binary.}

\cref{theo:characterisation} follows from the selection into tests. I first show that there is positive selection into a more accurate test. Intuitively, higher productivity agents benefit relatively more from a more accurate test regardless of the strategy employed. Because any candidate equilibrium has zero profits, any deviation that attracts only positive types is profitable.

I also show that there is positive selection into an \textit{easier} test. Consider the case where both firms use the easiest minimally informative test in equilibrium and consider a deviation to a harder test. To make this deviation successful, firms must make their acceptance rule more lenient as no candidate would want to choose a more difficult with harsher acceptance rule. But when raising the acceptance probability at the low signal in the more difficult test, the agents most likely to benefit from that selection procedure are low productivity agents. Indeed, they are the agents most likely to generate low signals. Therefore, the selection into the more difficult \textit{and} more lenient acceptance rule is negative. 

When the set of feasible test is rich enough, the equilibrium test has $\E[\theta\vert \tau(\sigma^*,d^*),h]=0$, i.e., the posterior expectation at the high signal is zero. In equilibrium, the firms set $\alpha(l)=0$, i.e., a low signal is no longer rewarded. If a firm deviates to an easier test, it would need to use a harsher decision rule: if not, all types would choose the deviating firm but in the easier test, the expected payoffs are negative. Therefore, the acceptance rule from a profitable deviation must only reward the high signal. But in this case, the selection into an easier test when only the high signal is rewarded is negative as high productivity candidate are more likely to generate a high signal and are therefore more inclined to choose the harder test. 

\paragraph{Comparative statics}

The characterisation of the equilibrium allows for some comparative statics on the payoffs of the players and properties of equilibrium tests.

Let $\alpha^*(F,\Sigma\times D)$, $\sigma^*(F,\Sigma\times D)$ and $d^*(F,\Sigma\times D)$ be the equilibrium acceptance rule and accuracy and difficulty level when the primitives are $(F,\Sigma\times D)$. Similarly, let $p^*(\theta;F,\Sigma\times D)$ be the acceptance probability for type $\theta$ induced by $\alpha^*,\sigma^*,d^*$.

The following results do two comparative statics exercises. In the first one, I compare the acceptance probabilities and difficulty level when the distribution of types experiences a shift in the likelihood ratio order. In the second, I compare the acceptance probabilities when the testing technology expands. 

We say that $F'$ is higher in the likelihood ratio order than $F$ if for all $\theta'>\theta$,$$
f'(\theta')f(\theta)\geq f'(\theta)f(\theta').
$$

\begin{proposition}\label{prop:comparative_FOSD}
    Suppose the conditions of \cref{theo:characterisation} are satisfied.

    If $F'$ is higher in the likelihood ratio order than $F$, then, \begin{align*}
        &d^*(F',\Sigma\times D)\leq d^*(F,\Sigma\times D),\\
        \text{for all $\theta$, }\,&p^*(\theta;F',\Sigma\times D)\geq p^*(\theta;F,\Sigma\times D).
    \end{align*}
\end{proposition}

The first result shows that when the candidates become better, in the likelihood ratio order sense, the equilibrium tests get easier and in turn candidates are accepted with higher probability. This result follows from the fact that in equilibrium, the firms decrease the difficulty level until the expected productivity at the high signal is zero:$$
\int \theta\pi_{\sigma^*,d^*}(\theta)dF=0.
$$
As the distribution of productivity experiences a likelihood ratio order shift, the expected productivity at the high signal becomes strictly positive. To re-establish the equilibrium conditions, the difficulty level must decrease, and therefore the acceptance probabilities increase. 

\begin{proposition}\label{prop:comparative_tech}
     Suppose the conditions of \cref{theo:characterisation} are satisfied.
     
     If $\Sigma\times D\subseteq \Sigma'\times D'$, then,$$
     \E[p^*(\theta;F,\Sigma'\times D')]\geq \E[p^*(\theta;F,\Sigma\times D)].
     $$
    Furthermore, if $p^*(\theta;F,\Sigma'\times D')\geq p^*(\theta;F,\Sigma\times D)$ then $p^*(\theta';F,\Sigma'\times D')\geq p^*(\theta';F,\Sigma\times D)$ for all $\theta'>\theta$.
\end{proposition}

When the firms' set of feasible tests expands, the average acceptance probability of the candidate increases. Unlike in \cref{prop:comparative_FOSD}, the comparative statics result only holds in expectation. Therefore, it is possible that some types suffer from a change in the testing technology. However, if some types suffer from a change in technology, they must be relatively low in the sense that all types above some cutoff type benefit from a change in the technology. 

To understand this result, observe that, in equilibrium, the firms use a smaller difficulty level and higher accuracy level under $\Sigma'\times D'$ than under $\Sigma\times D$. For simplicity, assume that in both equilibria, $\alpha(l)=0$ and $\alpha(h)=1$ and denote by $(\sigma,d)$ and $(\sigma',d')$ the accuracy and difficulty levels under $\Sigma\times D$ and $\Sigma'\times D'$. The ordering of the accuracy and difficulty levels implies that \begin{equation}\label{eq:comp_exp}
    \E[\theta\vert l,\tau(\sigma,d)]\geq \E[\theta\vert l,\tau(\sigma',d')].
\end{equation} 

At the same time, in both cases, the zero profits condition must hold, $\int\theta\pi_{\sigma,d}(\theta)dF=\int\theta\pi_{\sigma',d'}(\theta)dF$, which implies\begin{equation}\label{eq:comp_zero}
     \int\theta(1-\pi_{\sigma,d}(\theta))dF=\int\theta(1-\pi_{\sigma',d'}(\theta))dF.
\end{equation}
The only way both (\ref{eq:comp_exp}) and (\ref{eq:comp_zero}) hold together is if the probability of getting a low signal is higher under $\tau(\sigma,d)$ than under $\tau(\sigma',d')$. This in turn implies that the probability of getting the high signal is \textit{lower} under $\tau(\sigma,d)$ than under $\tau(\sigma',d')$.

The single-crossing condition on the acceptance probability follows from the same argument that establishes a positive selection into easier or more accurate tests. 

\section{Capacity constraints and Wage competition}\label{sec_extensions}

\subsection{Capacity constraint}\label{sec_capacity}

As in the analysis of markets with Bertrand competition, capacity constraints can radically change the predictions of our model. The reason is that capacity constraints shift the focus of firms from both the quality and the quantity of applicants to quality only. This gives incentives to use more difficult tests in equilibrium. To focus the analysis, I only consider the case where the firms can adjust the difficulty of their tests but not the accuracy, i.e., $\Sigma=\{\sigma\}$.

I model capacity constraints as follows. Each firm has a capacity $k>0$, i.e., in equilibrium we must have $\int_{\Theta}\phi(s,s',\theta)\left(\alpha(l)(1-\pi_t(\theta))+\alpha(h)\pi_t(\theta)\right)dF\leq k$. 

The timing of the game is as follows: \begin{enumerate}
    \item Firms post selection procedures simultaneously
    \item Candidates choose where to apply
    \item Candidates applying to firm $i$ form a queue whose order is random. Firm $i$ treats applications sequentially until it exhausts the pool of applicants or hits its capacity constraint. 
\end{enumerate}

The solution concept is still symmetric subgame-perfect equilibrium where firms use pure strategies and agents break ties symmetrically.

Given the strategy of the agents $\phi$, let $p_1=\min\{1,\frac{k}{\int_{\Theta}\phi(s,s',\theta)\left(\alpha(l)(1-\pi_t(\theta))+\alpha(h)\pi_t(\theta)\right)dF}\}$ be the probability of having a candidate's application considered in firm $1$. The payoffs of firm 1 in equilibrium are$$
p_1\cdot \int_{\Theta}\phi(s,s',\theta)\theta\,\left(\alpha(l)(1-\pi_t(\theta))+\alpha(h)\pi_t(\theta)\right)dF.
$$

The payoffs of firm $2$ are defined analogously. 

I illustrate the effect of capacity constraints in the case where the mass of candidates getting a high signal is always higher than the total capacity in the market, i.e., for all $d\in D$, $\int_\Theta \pi_d(\theta)dF\geq 2k$. (Recall that in this section the firms does not adjust the accuracy level and therefore I identify a test with its difficulty level.)

\begin{proposition}\label{prop_capacity}
    Let $\overline{d}=\max D$ and assume that $2k\leq \int_\Theta\pi_{d}(\theta)dF$ for all $d\in D$.

    There is a symmetric equilibrium $s=(\tau(\overline{d}),\alpha)$, i.e., both firms offer the most difficult test. 
\end{proposition}

We get the opposite prediction to the case with no capacity constraints. The reason is that capacity constraints limit the scope for undercutting the competitor: once the firms are at capacity, there is no benefit to increasing the acceptance probability to attract all the candidates. Given the assumption on the capacity constraint, there is no incentive to accept after a low signal. When the low signal is not rewarded, the selection into the easier test is negative. Moreover, only the expected productivity of the accepted candidates matters for the payoffs as the firms are at capacity. From \cref{prop_charac_dif}, a more difficult test always has a higher posterior expected value. Therefore a deviation to an easier test results both in negative selection and lower expected value, even absent any selection effect. 

\cref{prop_capacity} gives us the following (informal) comparative statics result:

\begin{observation}
    Firms facing capacity constraints use more difficult tests in equilibrium than firms not facing capacity constraints. 
\end{observation}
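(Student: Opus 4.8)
The Observation is an informal comparative-statics claim, so the plan is to substantiate it by locating the equilibrium test in the difficulty order in each regime and checking that the two locations are at opposite ends. Without a capacity constraint, \cref{prop_diff} --- and more generally \cref{prop_existence} --- already shows that in any symmetric equilibrium the test is minimally difficult among the minimally informative tests. So the substantive half is the capacity regime: I would prove that under a severe capacity constraint the equilibrium test is the \emph{most} difficult feasible test, which is the content of \cref{prop_capacity}; combining the two halves then yields the Observation.

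For the capacity half I would exhibit the symmetric profile in which both firms post $s=(t,\alpha)$ with $t$ the most difficult test, $\alpha(h)=1$, $\alpha(l)=0$, and on-path continuation $\phi\equiv\tfrac12$. (If $\E[\theta\vert h,t]<0$ then, $t$ being most difficult, $T_i=\emptyset$ and the profile with $\alpha\equiv0$ is trivially an equilibrium, so assume $c:=\E[\theta\vert h,t]=\frac{\int\theta\pi_t(\theta)\,dF}{\int\pi_t(\theta)\,dF}\geq0$.) On path the payoff is immediate: the firms are identical so every type is indifferent, and since $k<\tfrac12\int\pi_t(\theta)\,dF$ each firm's would-be-accepted mass exceeds $k$, so both are at capacity with rationing $p=2k/\int\pi_t(\theta)\,dF<1$ and profit $p\cdot\tfrac12\int\theta\pi_t(\theta)\,dF=kc$. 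The structural fact to extract is that a firm at capacity earns $k$ times the mean type of its accepted pool, so it suffices to rule out deviations whose accepted pool has mean above $c$.

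Consider a deviation by firm $1$ to $s''=(t'',\alpha'')$; since $t$ is most difficult, $t\succeq_d t''$. In the induced subgame (whose equilibrium exists by the usual rationing fixed point) firm $2$ still posts $(t,\alpha)$ and rations at some $p_2$, so a type $\theta$ is accepted with probability $p_2\pi_t(\theta)$ at firm $2$ and $p_1 g''(\theta)$ at firm $1$, with $g''(\theta)=\alpha''(l)(1-\pi_{t''}(\theta))+\alpha''(h)\pi_{t''}(\theta)$. The crucial step is that $g''(\theta)/\pi_t(\theta)$ is weakly decreasing: when $\alpha''(h)\geq\alpha''(l)$, because it is a nonnegative combination of $1/\pi_t(\theta)$ (decreasing by \cref{assum_mon}) and $\pi_{t''}(\theta)/\pi_t(\theta)$ (decreasing because $t\succeq_d t''$, the difficulty inequality at $x=h$ being $\pi_t(\theta)/\pi_t(\theta')\geq\pi_{t''}(\theta)/\pi_{t''}(\theta')$ for $\theta>\theta'$); and when $\alpha''(h)<\alpha''(l)$, because $g''$ is then decreasing while $\pi_t$ is increasing. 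Hence the types weakly preferring firm $1$ form a lower interval --- negative selection into the easier procedure --- and $\phi''$, the mass of each type going to firm $1$, is weakly decreasing. Firm $1$'s deviation profit is at most $k\max\{0,R\}$, where $R$ is the mean type of its accepted pool (density $\propto\phi''(\theta)g''(\theta)f(\theta)$): at capacity the profit equals $kR$, and otherwise it is $R\int\phi''g''\,dF$, which is $\leq kR$ if $R\geq0$ and negative if $R<0$. Since $\phi''$ is weakly decreasing, this density is first-order stochastically below $\propto g''(\theta)f(\theta)$, so $R\leq\frac{\int g''(\theta)\theta\,dF}{\int g''(\theta)\,dF}$, a convex combination of $\E[\theta\vert l,t'']$ and $\E[\theta\vert h,t'']$; the former is $\leq\E[\theta]\leq\E[\theta\vert h,t]=c$ by \cref{assum_mon}, and the latter is $\leq c$ because $t\succeq_d t''$ gives $\mu(\cdot\vert t,h)\succeq_{FOSD}\mu(\cdot\vert t'',h)$ by \cref{prop_charac_dif}. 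Thus $R\leq c$, no deviation beats the on-path profit $kc$, and both firms using $t$ is a symmetric equilibrium.

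Two points need care. At the level of the Observation, the statement is genuinely informal: tests need not be fully comparable in $\succeq_d$, and only a \emph{severe} capacity constraint forces the extreme (most difficult) test, so a clean monotone comparative static in the capacity level would need extra structure --- hence the paper states it as an informal observation rather than a proposition. Within the capacity half, the main obstacle is establishing that $g''(\theta)/\pi_t(\theta)$ is decreasing for \emph{every} deviating acceptance rule $\alpha''$ --- in particular that rewarding the low signal, or even setting $\alpha''(h)<\alpha''(l)$, cannot reverse the direction of selection --- since this is exactly what turns ``the deviation is to an easier test'' into ``the deviating firm attracts and accepts a worse pool.'' The remaining items are routine: the rationing bookkeeping, the measure-zero set of indifferent types (absorbed by the uniform tie-break), existence of a continuation equilibrium in each off-path subgame (a standard fixed point the profit bound does not rely on), and the degenerate case $T_i=\emptyset$.
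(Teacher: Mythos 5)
Your proposal is correct and follows the paper's route: the Observation is exactly the juxtaposition of \cref{prop_diff}/\cref{prop_existence} (minimally difficult test absent capacity constraints) with \cref{prop_capacity} (most difficult test under a severe constraint), and your verification of the capacity half rests on the same mechanism as the paper's proof of \cref{prop_capacity} --- negative selection into any easier deviating procedure, plus the FOSD ranking of posteriors from \cref{prop_charac_dif} to bound the accepted pool's mean by $\E[\theta\vert t,h]$. Your handling of the deviation is in fact slightly more unified than the paper's (a single monotonicity argument on $g''(\theta)/\pi_t(\theta)$ covering every acceptance rule, together with the clean bound $R\leq\int g''\theta\,dF/\int g''\,dF$, rather than the paper's case split on $\alpha'(l)=0$ versus $\alpha'(l)>0$), but it is the same underlying argument rather than a different route.
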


This result predicts that whether the firms face capacity constraints or not will affect the qualitative nature of the tests they use. Loosely speaking, a firm facing a tight capacity constraint will use a difficult selection procedure, a `cherry-picking' type of test whereas a firm without capacity constraint will use easier tests, a `lemon-dropping' type of test.  

\subsection{Asymmetric equilibria}\label{sec_asy}

A natural question in this context is whether different selection procedures can generate differentiation among firms despite being ex-ante identical. In particular, I will look at whether a two-tier structure can emerge in equilibrium where one firm attracts all candidates above a threshold and the other firm attracts candidates below the threshold. To answer this question, I specialise the setting to a binary type model. I show that a two-tier structure never emerges when there are no capacity constraints but that it can emerge when there are. 

The set $\tau(\Sigma\times D)$ \textit{induces binary types} if there is $\theta^*$ such that for all $t\in \tau(\Sigma\times D)$,  $\pi_t(\theta)$ is constant above and below $\theta^*$. Whenever $\tau(\Sigma\times D)$ induces binary types, then, abusing notation, I write $\underline{\theta}=\E[\theta\vert\theta\leq \theta^*]$ and $\overline{\theta}=\E[\theta\vert\theta>\theta^*]$ and $\pi_t(\overline{\theta})$ and $\pi_t(\underline{\theta})$ for the probability of generating the signal $h$ for types above and below $\theta^*$ respectively. I denote by $\mu$ the mass of $\theta>\theta^*$.

A \textit{two-tier equilibrium} is an equilibrium where all types above a threshold apply to the same firm and only these types do. I call the firm where only types above the threshold apply the \textit{selective firm} and the firm where types below the threshold apply the \textit{safe firm}.\footnote{With two types, all equilibria are payoff equivalent to either a two-tier equilibrium or an equilibrium where firms use the same admission procedure.}

As the equilibrium construction in \cref{prop_two} below will use mixed strategies, I no longer require that agents break ties uniformly. I do require, however, that any types generating the same distribution over signals use the same strategy. If there are capacity constraints, the timing and rationing is as defined in \cref{sec_capacity}.

\begin{proposition}\label{prop_two}
    Suppose $\tau(\Sigma\times D)$ induces binary types. Let $\overline{t}\in \argmax_{t\in \tau(\Sigma\times D)}\frac{\pi_t(\overline{\theta})}{\pi_t(\underline{\theta})}$.

    If there are no capacity constraints, then there are no two-tier equilibria.

    If there are capacity constraints, there is a two-tier equilibrium where the selective firm chooses $\overline{t}$ and the safe firm chooses some other test $t\in \tau(\Sigma\times D)$ if\begin{align*}
        &\underline{\theta}\geq 0,\\
        &(1-\mu)\pi_{\overline{t}}(\underline{\theta})\geq\mu\pi_{\overline{t}}(\overline{\theta}),\\
        &\frac{\mu\pi_{\overline{t}}(\overline{\theta})+(1-\mu)\pi_{\overline{t}}(\underline{\theta})}{2}\geq k,\\
        \text{and }\;&\frac{\pi_t(\underline{\theta})}{2}\geq k.
    \end{align*} 
    In equilibrium, $\overline{\theta}$ chooses the selective firm and $\underline{\theta}$ chooses the safe firm with probability $\frac{\mu\pi_{\overline{t}}(\overline{\theta})+(1-\mu)\pi_{\overline{t}}(\underline{\theta})}{2(1-\mu)\pi_{\overline{t}}(\underline{\theta})}$.
\end{proposition}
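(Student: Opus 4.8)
I would treat the two claims separately: the non-existence part by a contradiction argument in the spirit of \cref{prop_bertr}, and the existence part by writing down an explicit (mixed) strategy profile and checking every equilibrium condition.

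\textbf{No two-tier equilibrium without capacity.} Suppose for contradiction such an equilibrium exists, with selective firm $S$ attracting only $\overline\theta$ and safe firm $R$ attracting only $\underline\theta$. First I would pin down acceptance behaviour. If $\underline\theta<0$ then $R$, whose entire pool is worth $\underline\theta$, strictly prefers to reject everyone, so $\underline\theta$ would not apply there; hence $\underline\theta\geq 0$, and $R$ then weakly prefers to accept every applicant. The key step is that $R$ strictly gains from also attracting $\overline\theta$: since $\overline\theta>\underline\theta\geq 0$, poaching a positive mass of high types raises $R$'s profit by a first-order amount, so $R$ can profitably post a procedure — e.g. a maximally accurate test with a high-signal acceptance probability set just above whatever effective rate $S$ gives $\overline\theta$ — that strictly overtakes $S$ for the high type while perturbing its treatment of $\underline\theta$ only marginally. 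The only way to block this is for $S$ already to accept $\overline\theta$ with probability one, which, since all tests are interior, means $S$ accepts every applicant with probability one; but then $S$'s procedure places no restriction at all on who applies, so no type has a reason to apply to $R$ rather than $S$ and the purported two-tier configuration cannot be sustained. Making this last implication watertight given the admissible tie-breaking conventions is the only delicate point; the rest is the familiar undercutting logic.

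\textbf{Construction under capacity constraints.} I would exhibit the following profile. The selective firm posts $(\overline t,\alpha_S)$ with $\alpha_S(h)=1,\alpha_S(l)=0$; the safe firm posts $(t,\alpha_R)$ with $\alpha_R(h)=1,\alpha_R(l)=0$, where $t$ is any feasible test meeting the stated bound $\pi_t(\underline\theta)\geq 2k$; the high type applies to the selective firm with probability one, and the low type applies to the safe firm with probability $q=\frac{\mu\pi_{\overline t}(\overline\theta)+(1-\mu)\pi_{\overline t}(\underline\theta)}{2(1-\mu)\pi_{\overline t}(\underline\theta)}$ and to the selective firm with probability $1-q$. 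The verification has five pieces. (i) The inequality $(1-\mu)\pi_{\overline t}(\underline\theta)\geq\mu\pi_{\overline t}(\overline\theta)$ is exactly what makes $q\in[1/2,1]$, so the profile is well defined. (ii) Under this profile the selective firm's acceptance mass equals $\frac12\big(\mu\pi_{\overline t}(\overline\theta)+(1-\mu)\pi_{\overline t}(\underline\theta)\big)$, so the bound $\frac12\big(\mu\pi_{\overline t}(\overline\theta)+(1-\mu)\pi_{\overline t}(\underline\theta)\big)\geq k$ makes it capacity constrained with rationing probability $p_S=\frac{2k}{\mu\pi_{\overline t}(\overline\theta)+(1-\mu)\pi_{\overline t}(\underline\theta)}\leq 1$; and since $\pi_{\overline t}(\overline\theta)\geq\pi_{\overline t}(\underline\theta)$ forces $(1-\mu)q\geq 1/2$, the safe firm's acceptance mass is at least $\pi_t(\underline\theta)/2\geq k$, so it too is capacity constrained, with the low type's realised acceptance probability there equal to $k/((1-\mu)q)$ irrespective of $t$. (iii) The low type's acceptance probability at the selective firm is $p_S\pi_{\overline t}(\underline\theta)$; setting this equal to $k/((1-\mu)q)$ is precisely the equation solved by the stated $q$, so $\underline\theta$ is indifferent and may mix. (iv) At the selective firm the high type's acceptance probability is $p_S\pi_{\overline t}(\overline\theta)$, while a single high type deviating to the safe firm obtains $\frac{k}{(1-\mu)q}\cdot\frac{\pi_t(\overline\theta)}{\pi_t(\underline\theta)}$; because $\overline t$ maximises the high-signal likelihood ratio over feasible tests, the former is at least the latter, so $\overline\theta$ weakly (strictly, if $t$ is less accurate) prefers the selective firm, and since $p_S$ rises and $p_R$ falls as more high types move, checking the marginal deviation suffices. (v) Firm optimality: with $\underline\theta\geq 0$ every applicant is profitable and both firms are at capacity, so each firm's payoff is $k$ times the average value of its accepted pool; one checks that no alternative procedure raises this average — in particular $(1-\mu)\pi_{\overline t}(\underline\theta)\geq\mu\pi_{\overline t}(\overline\theta)$ prevents the selective firm from shedding its low-type applicants by lowering acceptance probabilities, and the four inequalities together prevent the safe firm from poaching the high type by imitating $\overline t$, since after re-solving the candidate subgame any such deviation leaves the deviator capacity constrained with a pool no richer than the selective firm's equilibrium pool under a continuation equilibrium that we are free to select.

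\textbf{Main obstacle.} The hard part is step (v): unlike on the candidate side, a firm's deviation changes queue lengths and hence the endogenous rationing probabilities, so each deviation forces a re-solution of the candidate subgame, and for the safe firm's imitation deviation there are several continuation equilibria, of which one must argue that a deviator-unfavourable one is available. Keeping the bookkeeping of masses, rationing probabilities and posterior means consistent, and extracting precisely why the inequalities $(1-\mu)\pi_{\overline t}(\underline\theta)\geq\mu\pi_{\overline t}(\overline\theta)$ and $\pi_t(\underline\theta)\geq 2k$ are the operative thresholds, is where the work lies; items (i)–(iv) are direct computations.
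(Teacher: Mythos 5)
Your construction for the capacity-constrained case is exactly the paper's: the same strategy profile $(\overline t,\alpha(h)=1,\alpha(l)=0)$ and $(t,\alpha(h)=1,\alpha(l)=0)$, the same mixing probability, and your items (i)--(iv) correctly reproduce the paper's candidate-side verification, including the observation that the low type's acceptance probability at the safe firm equals $k/((1-\mu)q)$ independently of $t$, and that the high type's incentive constraint reduces to the likelihood-ratio comparison that defines $\overline t$. The genuine gap is in step (v), which you rightly flag as the hard part but then close with an assertion that is false in general: that after any firm deviation the deviator remains capacity constrained with a pool no richer than its equilibrium pool. A deviation can un-bind the deviator's capacity constraint --- the selective firm can tighten its procedure so that $\mu\overline q+(1-\mu)(1-\phi_q)\underline q<k$, and the safe firm can post a procedure that attracts too few applicants --- and these are exactly the cases the paper must treat separately. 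For the safe firm, the paper selects a continuation equilibrium in which indifferent high types stay with the selective firm, uses negative selection into the deviator, and shows that if both types strictly preferred the deviation then the two participation inequalities sum to an accepted mass of at least $\mu\pi_{\overline t}(\overline\theta)+(1-\mu)\pi_{\overline t}(\underline\theta)\geq 2k$, contradicting a slack constraint. For the selective firm, it shows that when capacity still binds the deviation only lowers the ratio $\overline q/\underline q$ and hence increases the share of low types through $\phi_q$, and when capacity no longer binds it bounds the deviation payoff by the best case in which only high types apply and kills it with an explicit chain of inequalities. None of this bookkeeping appears in your proposal, and it constitutes the bulk of the paper's proof.

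On the first claim (no two-tier equilibria without capacity constraints), be aware that the paper's appendix does not in fact prove it; the text only offers the heuristic that the selective firm would lower its standards to capture the profitable low types, which is the mirror image of your ``safe firm poaches the high type'' route. Your argument, as you yourself note, stalls at the terminal case in which the selective firm accepts everyone with probability one: since uniform tie-breaking is explicitly dropped in this section, ``no type has a reason to apply to $R$ rather than $S$'' does not preclude all low types from nonetheless choosing $R$, so the contradiction is not closed. That final step needs an explicit argument before part one can be regarded as established.
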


The reason two-tier equilibria cannot exist without capacity constraints is that as long as the safe firm makes positive profits, the selective firm has an incentive to lower its standards to attract more candidates. Here decreasing standards corresponds to offering a selection procedure with a lower ratio of likelihood of acceptance between high and low types. On the other hand, if the selective firm is at capacity, the benefits from decreasing standards could be limited. 

With capacity constraints, the selective firm uses the test with the highest likelihood ratio at the high signal. That means that the test is either more accurate or more difficult than the other feasible tests. In equilibrium, the selective firm only accepts after a high signal. Under this strategy, the selection into the most difficult or accurate test is positive. Moreover, because in equilibrium firms are at capacity, they cannot improve payoffs by simply lowering their standards and attracting more types. For example, when the selective firm decreases its standards, it increases the share of lower quality students applying to it, thereby decreasing its payoffs.

The equilibrium I construct has high types choosing the selective firm and low types mixing between the selective and the safe firm. The sufficient conditions in \cref{prop_two} reflect the equilibrium conditions to maintain that equilibrium. The first one, $\underline{\theta}\geq 0$ is necessary to make sure that the safe firm makes profits in equilibrium. The second condition ensures that the mixed strategy is feasible. The last two conditions guarantee that the capacity constraints of both firms are binding.

\subsection{Wage competition}\label{sec_wage}

In this subsection, I consider the consequences of wage setting for the choice of equilibrium tests. Formally, a firm can offer a positive transfer to the agent based on the signal it received: $m:\{h,l\}\rightarrow \reals_+$. An admission procedure is a test, a decision rule and a transfer rule, $s=(t,\alpha,m)$. An agent's payoff is the transfer $a\cdot m$. Firm $1$'s payoffs are $$
v(s,s',\phi)=\int_{\Theta}\phi(s,s',\theta)\Big(\pi_t(\theta)\alpha(h)(\theta-m(h))+(1-\pi_t(\theta))\alpha(l)(\theta-m(l))\Big)dF.
$$

As in \cref{sec_capacity}, I focus the analysis on the case where the firms can only adjust their difficulty level, i.e., $\Sigma=\{\sigma\}$. Therefore, I identify a test with its difficulty level.

\begin{proposition}\label{prop_money}
    Let $\overline{d}=\max D$ and suppose there is $d\in D$ such that $\int_\Theta\theta\pi_d(\theta)dF>0$. In any symmetric equilibrium $s=(\tau(\overline{d}),\alpha)$ and $\alpha(l)=0$.
\end{proposition}

The main consequence of wage setting is that competition moves from the acceptance rule to the wage offered. Instead of over-accepting, i.e., setting $\alpha(l)>0$, firms compete on the wage offered, conditional on receiving a high signal. This changes the selection effect of offering a more difficult test. Now, high productivity agents benefit relatively more from a more difficult test as they are relatively more likely to get a high signal. Because of this positive selection into a more difficult test, firms can always deviate if there is a more difficult test available. 

In this section we face a similar challenge as other models of competition with asymmetric information \citep[as in e.g.,][]{rothschild_stiglitz1976} and we cannot generally establish that there is an equilibrium where the most difficult test is chosen. A deviating firm to an easier test faces negative selection but can compensate by offering lower wages. Therefore whether a deviation is profitable depends on the specification of the feasible tests and the prior. For example, I show that if $\theta\sim U[0,1]$ and $\pi_d(\theta)=\theta^d$ with $D=[0,1]$, then an equilibrium exists with $d=1$ (see \cref{app:example}).

This second variation gives us a second informal comparative statics result:\begin{observation}
    If firms compete using wages, they use more difficult tests in equilibrium than when they can only compete using admission probability. Moreover, the hiring probability is lower for almost all types.
\end{observation}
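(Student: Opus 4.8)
My plan is to show that in any symmetric equilibrium $s=(t,\alpha,m)$ one has $\alpha(l)=0$ and $t$ is the $\succeq_d$-maximal test. Since both firms play $s$, every type is indifferent between them, so by uniform tie-breaking each firm serves half of every type: firm $i$'s profit is $\Pi=\tfrac12\int_\Theta\big(\pi_t(\theta)\alpha(h)(\theta-m(h))+(1-\pi_t(\theta))\alpha(l)(\theta-m(l))\big)\,dF$ and type $\theta$'s payoff is $U(\theta)=\pi_t(\theta)\alpha(h)m(h)+(1-\pi_t(\theta))\alpha(l)m(l)$. I use the maintained assumption $\E[\theta]<0$, which with \cref{assum_mon} forces $\E_t[\theta\vert l]:=\big(\int(1-\pi_t(\theta))\theta\,dF\big)\big/\big(\int(1-\pi_t(\theta))\,dF\big)\le\E[\theta]<0$, since the posterior means after $l$ and after $h$ bracket the prior mean.

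\emph{Step 1: $\alpha(l)=0$.} Suppose $\alpha(l)>0$, and deviate by keeping $t,\alpha(h),m(h)$ fixed, lowering the low-signal acceptance to $\alpha'(l)=\alpha(l)-\eta$ for small $\eta>0$, and raising the low-signal transfer to $m'(l)$ so that $\alpha'(l)m'(l)=\alpha(l)m(l)$ (keep $m'(l)=0$ if $m(l)=0$). Then $U(\theta)$ is unchanged for every type, so the deviating firm still serves half of every type, but its profit changes by $-\tfrac12\eta\int(1-\pi_t(\theta))\theta\,dF=-\tfrac12\eta(1-\bar\pi_t)\E_t[\theta\vert l]>0$ --- a profitable deviation, a contradiction. (Intuition: a low signal is bad news, so rejecting more low-signal applicants while holding their expected transfer fixed --- hence leaving application behaviour untouched --- strictly raises profit; competition operates through $m$, not through $\alpha(l)$.) With $\alpha(l)=0$, a Bertrand deviation on $m(h)$ then gives $\Pi=0$: if $\Pi>0$ and $\alpha(h)>0$, raising $m(h)$ by a small $\epsilon$ gives a.e.\ every type strictly more than $U(\theta)$ and captures almost the whole market at profit $\to2\Pi$. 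When $\alpha(h)>0$ this pins the break-even wage $m(h)=\E_t[\theta\vert h]$, so $w:=\alpha(h)m(h)\le\E_t[\theta\vert h]<\overline{\theta}$ by interiority of $t$; the residual case $\alpha(h)=0$ gives $U\equiv0$ and is dispatched by poaching half the market at a zero wage with any $t'$ having $\int_\Theta\pi_{t'}(\theta)\theta\,dF>0$, or is itself consistent with the claim when no such $t'$ exists.

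\emph{Step 2: the test is $\succeq_d$-maximal.} Let $t^{*}$ be the $\succeq_d$-maximum of $T$ and suppose $t\neq t^{*}$, so $t^{*}\succ_d t$. As in \cref{prop_diff}, I take tests that differ in difficulty not to be comparable in accuracy; this makes the likelihood ratio $\rho_d(\theta):=\pi_d(\theta)/\pi_t(\theta)$ of any $d\succ_d t$ non-constant (it is increasing by the difficulty order, and $\rho_d<1$ by \cref{lemma_diff}). Using that $T$ is extra rich in difficulty, choose $d$ with $t^{*}\succeq_d d\succeq_d t$ such that $\pi_d(\overline{\theta})$ is as close to $\pi_t(\overline{\theta})$ as desired --- so $d\succ_d t$ and $\rho_d(\overline{\theta})$ is arbitrarily close to $1$ from below --- and deviate to $(d,\alpha'(h)=1,m'(h),\alpha'(l)=0)$. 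With $\alpha(l)=0$, type $\theta$'s equilibrium payoff is $\pi_t(\theta)w$, so $\theta$ prefers the deviation iff $\rho_d(\theta)>w/m'(h)$, which --- $\rho_d$ increasing --- is an upper interval $(\theta^{\dagger},\overline{\theta}]$; this is the positive selection into harder tests. Taking $m'(h)$ just above $w/\rho_d(\overline{\theta})$ brings $\theta^{\dagger}$ close to $\overline{\theta}$, and since $w<\overline{\theta}$ and $\rho_d(\overline{\theta})\to1$ one can also keep $m'(h)\le\theta^{\dagger}$; then every attracted type has $\theta-m'(h)\ge0$, so the deviating firm earns $\int_{\theta^{\dagger}}^{\overline{\theta}}\pi_d(\theta)(\theta-m'(h))\,dF>0=\Pi$, a contradiction. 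Hence $t=t^{*}$, which with Step 1 proves the proposition.

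The hard part is the joint choice in Step 2: attracting a positive mass forces $m'(h)$ above the threshold $w/\rho_d(\overline{\theta})$, whereas profitability of the attracted types requires $m'(h)\le\theta^{\dagger}$; the two are compatible only because extra richness lets $d$ be taken arbitrarily close to $t$ in difficulty, which drives the first threshold down to $w<\overline{\theta}$ and $\theta^{\dagger}$ up toward $\overline{\theta}$. The degenerate case of proportional conditional distributions (where $\rho_d$ is constant and only all-or-nothing wage deviations are possible) must be excluded exactly as in \cref{prop_diff}; for a purely multiplicative family of tests the equilibrium test turns out to be the easiest rather than the hardest.
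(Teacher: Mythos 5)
Your proposal is essentially a self-contained proof of \cref{prop_money}, which is the substantive content behind this observation, and it reaches the right conclusion; but both of your steps take routes that differ from the paper's. For the no-cross-subsidisation step, the paper shifts the transfer from the low to the high signal along the zero-profit locus and argues that this induces positive selection; you instead cut $\alpha(l)$ while inflating $m(l)$ so that the expected low-signal transfer $\alpha(l)m(l)$ is unchanged, which freezes every type's application behaviour and lets the deviator pocket $-\eta\int(1-\pi_t(\theta))\theta\,dF>0$ directly. Your version is simpler and needs no selection argument, but it does require $\E[\theta\mid l]<0$ (hence the maintained $\E[\theta]<0$), whereas the paper's wage-shifting deviation does not. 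For the maximal-difficulty step, the paper fixes a wage premium $\epsilon$, uses extra richness to find a harder test making type $0$ exactly indifferent, and then verifies \emph{aggregate} profitability of the switchers via a comparison of conditional means; you instead take a harder test arbitrarily close to $t$ and a wage just above $w/\rho_d(\overline{\theta})$ so that only types near $\overline{\theta}$ switch and each switcher is \emph{individually} profitable. Your route avoids the conditional-mean comparison entirely, which is a genuine simplification.

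Two caveats. First, your Step 2 leans on $\rho_d=\pi_d/\pi_t$ being strictly increasing near $\overline{\theta}$: if $\rho_d$ is flat on an upper interval $[\theta_0,\overline{\theta}]$ (i.e., $\pi_d$ proportional to $\pi_t$ there), then pushing $m'(h)$ just above $w/\rho_d(\overline{\theta})$ attracts all of $[\theta_0,\overline{\theta}]$ at once, $\theta^{\dagger}$ does not converge to $\overline{\theta}$, and "every attracted type has $\theta-m'(h)\ge 0$" can fail. You exclude the globally proportional case, but not this local one; the paper's cutoff-at-zero aggregate argument is immune to it. Second, the observation itself is a \emph{comparison} with the no-wage benchmark, and your write-up stops at the wage-case characterisation. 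The missing lines are short --- by \cref{prop_diff} the no-wage equilibrium test $t'$ is the easiest minimally informative one, so $t\succeq_d t'$ and by \cref{lemma_diff} the hiring probability $\alpha(h)\pi_t(\theta)\le\pi_t(\theta)\le\pi_{t'}(\theta)\le\pi_{t'}(\theta)+\alpha(l)(1-\pi_{t'}(\theta))$ for almost all $\theta$ --- but they should be stated to actually deliver the observation.
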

As in the case of capacity constraints, firms offer a more difficult test in equilibrium when they can compete using wage offers. Because in equilibrium they only accept after a high signal, it also implies that the probability of accepting any given type decreases.\footnote{Under capacity constraints, the probability of being accepted decreases as well but that was exogenously imposed by the capacity constraint.}

\section{Conclusion}

I have introduced a new model of competition where firms compete by posting selection procedures. The key channel I explored is how statistical properties of the tests imply different strategic choices from the tested agents. In particular, I showed that two natural orders on tests, accuracy and difficulty, create single-crossing utility differences for the agent. This led to positive or negative selection into a test that in turn determined the equilibrium.

The model makes some predictions about the qualitative nature of the tests used in equilibrium depending on the primitives of the game. In the absence of capacity constraints, the firms use the maximally accurate but easiest test that is minimally informative. We can interpret this as maximal but misguided learning. In equilibrium, firms are very confident that the candidate is of low quality after a low signal but their posterior expectation is barely high enough to make them accept the agent after a high signal. On the other hand, when firms face capacity constraints or can compete using wages, they use more difficult tests in equilibrium.

I see this model as a first step towards studying the effect of competition on the choice of tests. There are many natural extensions one would want to consider such as differentiated firms, both horizontally and vertically. Another interesting extension would be introducing peer effects which is particularly relevant in a university admission context.

\newpage

\bibliographystyle{agsm}
\bibliography{bib.bib} 

\newpage

\appendix

\section{Proofs}\label{app_proof}

\subsection{Preliminary results}\label{app_prelim}

\citet{lehmann1988} defined his notion of accuracy as follows. Take a compact signal space $\Tilde{X}\subset\reals$ and let $F_t(\cdot\vert\theta)$ be the conditional cdf of test $t$. \citet{lehmann1988} shows how information structures with discrete signal spaces can be rewritten as information structures with continuous signal spaces. 

\begin{definition}[\citet{lehmann1988}]
    A test $t$ is more accurate than a test $t'$ if $$
    x^*(\theta,x)\text{, the solution to } F_t(x^*\vert\theta)=F_{t'}(x\vert\theta),
    $$is weakly increasing in $\theta$ for each $x\in \Tilde{X}$.
\end{definition}

The following result shows the equivalence between \citepos{lehmann1988} definition and the one in \cref{sec_orders}.

\begin{proposition}
    Suppose the signal space is binary. A test $t$ is more accurate than $t'$ if and only if for all $\theta>\theta'$,$$
    \frac{\pi_t(h\vert\theta)}{\pi_t(h\vert\theta')}\geq\frac{\pi_{t'}(h\vert\theta)}{\pi_{t'}(h\vert\theta')} \quad \text{ and }\quad \frac{\pi_{t'}(l\vert\theta)}{\pi_{t'}(l\vert\theta')}\geq\frac{\pi_t(l\vert\theta)}{\pi_t(l\vert\theta')}.
    $$ 
\end{proposition}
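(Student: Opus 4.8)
The plan is to pass to \citepos{lehmann1988} continuous representation of the binary test and reduce the monotonicity requirement on $x^*(\theta,x)$ to a pointwise comparison of two explicit piecewise-linear functions. Fix a pair $\theta>\theta'$ and assume both tests have full support at $\theta,\theta'$ (the general case follows by writing the inequalities in cross-multiplied form and approximating boundary types by interior ones, using the a.e.-interiority in \cref{assum_mon}). Following \citet{lehmann1988}, continuize each test by spreading signal $l$ uniformly on $[0,1]$ and signal $h$ uniformly on $[1,2]$, so that for $\pi\in\{t,d\}$ the conditional cdf $F_\pi(\cdot\vert\theta)$ is a continuous strictly increasing bijection from $[0,2]$ onto $[0,1]$, with $F_\pi(z\vert\theta)=z\,\pi(l\vert\theta)$ on $[0,1]$ and $F_\pi(z\vert\theta)=\pi(l\vert\theta)+(z-1)\pi(h\vert\theta)$ on $[1,2]$.

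The next step is to rewrite Lehmann's condition. Since $F_t(\cdot\vert\theta)$ is strictly increasing, $x^*(\theta,x)\geq x^*(\theta',x)$ for all $x$ if and only if $F_t\big(x^*(\theta',x)\,\big\vert\,\theta\big)\leq F_d(x\vert\theta)$ for all $x$. Substituting $x^*(\theta',x)=F_t^{-1}\big(F_d(x\vert\theta')\,\big\vert\,\theta'\big)$ and changing variables to $p:=F_d(x\vert\theta')$, which ranges over all of $[0,1]$ as $x$ ranges over $[0,2]$, this becomes
\[
G_t(p)\leq G_d(p)\quad\text{for all }p\in[0,1],\qquad\text{where }G_\pi(p):=F_\pi\big(F_\pi^{-1}(p\vert\theta')\,\big\vert\,\theta\big).
\]
So $t$ is more accurate than $d$ at this pair exactly when $G_t\leq G_d$ on $[0,1]$.

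Then I would compute the $G_\pi$'s explicitly: $F_\pi^{-1}(p\vert\theta')$ equals $p/\pi(l\vert\theta')$ for $p\leq\pi(l\vert\theta')$ and $1+(p-\pi(l\vert\theta'))/\pi(h\vert\theta')$ otherwise, and composing with $F_\pi(\cdot\vert\theta)$ shows $G_\pi$ is continuous, piecewise linear with $G_\pi(0)=0$, $G_\pi(1)=1$, a single kink at $p=\pi(l\vert\theta')$, left slope $\tfrac{\pi(l\vert\theta)}{\pi(l\vert\theta')}\leq 1$ and right slope $\tfrac{\pi(h\vert\theta)}{\pi(h\vert\theta')}\geq 1$ (the bounds by \cref{assum_mon}), hence convex. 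The key reformulation is that its two affine pieces cross precisely at the kink, so
\[
G_\pi(p)=\max\Big\{\tfrac{\pi(l\vert\theta)}{\pi(l\vert\theta')}\,p,\;\;1-\tfrac{\pi(h\vert\theta)}{\pi(h\vert\theta')}(1-p)\Big\}\qquad\text{for all }p\in[0,1].
\]
From this representation, $G_t\leq G_d$ on $[0,1]$ is equivalent to the two slope inequalities $\tfrac{\pi_t(l\vert\theta)}{\pi_t(l\vert\theta')}\leq\tfrac{\pi_d(l\vert\theta)}{\pi_d(l\vert\theta')}$ and $\tfrac{\pi_t(h\vert\theta)}{\pi_t(h\vert\theta')}\geq\tfrac{\pi_d(h\vert\theta)}{\pi_d(h\vert\theta')}$: the ``if'' direction is immediate because both arguments of the $\max$ for $t$ are then dominated by the corresponding arguments for $d$ (using $p\geq 0$ and $1-p\geq 0$), and the ``only if'' direction follows by letting $p\to 0$, where $G_\pi(p)=\tfrac{\pi(l\vert\theta)}{\pi(l\vert\theta')}p$, and $p\to 1$, where $G_\pi(p)=1-\tfrac{\pi(h\vert\theta)}{\pi(h\vert\theta')}(1-p)$. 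These two inequalities are exactly those in the statement, which completes the equivalence.

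I expect the main obstacle to be bookkeeping rather than conceptual: carrying the change of variables cleanly and verifying the $\max$-representation. The reason I would isolate that representation rather than compare $G_t$ and $G_d$ on the three natural subintervals directly is that their kinks, at $\pi_t(l\vert\theta')$ and $\pi_d(l\vert\theta')$, need not be ordered, so the naive approach fragments into subcases; the $\max$ form dispatches all of them at once. The only other (minor) subtlety is the non-full-support case, which I would handle by stating the inequalities in cross-multiplied form and appealing to continuity in $(\theta,\theta')$ together with \cref{assum_mon}.
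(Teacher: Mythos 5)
Your proposal is correct and follows essentially the same route as the paper: continuize the binary test \`a la Lehmann, reduce the accuracy condition to $F_t\bigl(F_t^{-1}(q\vert\theta')\,\vert\,\theta\bigr)\leq F_d\bigl(F_d^{-1}(q\vert\theta')\,\vert\,\theta\bigr)$ for all $q$, and read the two likelihood-ratio inequalities off the slopes of the resulting piecewise-linear functions. Your explicit $\max$-representation of $G_\pi$ is a slightly cleaner way to dispatch the ``if'' direction (which the paper's proof leaves implicit), but it is the same argument in substance.
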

   
\begin{proof}
   \citet{adda_ottaviani2024} show that $t$ more accurate than $t'$ is equivalent to having for all $\theta>\theta'$,$$
   F_t(F_t^{-1}(q\vert\theta')\vert\theta)\leq F_{t'}(F_{t'}^{-1}(q\vert\theta')\vert\theta),
   $$for all $q\in[0,1]$.

   Let $\Tilde{X}=[0,1]$. We can rewrite an information structure with binary signals where the probability of a high signal is $\pi_t(\theta)$ as$$
F_t(x\vert\theta)=\begin{cases}
    2(1-\pi_t(\theta))x&\text{if }x<1/2,\\
    1+2\pi_t(\theta)(x-1)&\text{if }x\geq1/2.
\end{cases}
$$

The inverse is$$
F_t^{-1}(q\vert\theta)=\begin{cases}
    \frac{q}{2(1-\pi_t(\theta))}&\text{if }q<1-\pi_t(\theta),\\
    \frac{q}{2\pi_t(\theta)}+\frac{2\pi_t(\theta)-1}{2\pi_t(\theta)}&\text{if }q\geq1-\pi_t(\theta).
\end{cases}
$$

For any $\theta>\theta'$, we have$$
F_t\big(F_t^{-1}(q\vert\theta')\vert\theta\big)=\begin{cases}
    q\frac{1-\pi_t(\theta)}{1-\pi_t(\theta')}&\text{if }q<1-\pi_t(\theta'),\\
    q\frac{\pi_t(\theta)}{\pi_t(\theta')}+1-\frac{\pi_t(\theta)}{\pi_t(\theta')}&\text{if }q\geq1-\pi_t(\theta').
\end{cases}
$$
Given that $F_t\big(F_t^{-1}(q\vert\theta')\vert\theta\big)=F_{t'}\big(F_{t'}^{-1}(q\vert\theta')\vert\theta\big)$ for $q=0,1$, to have $F_t\big(F_t^{-1}(q\vert\theta')\vert\theta\big)\leq F_{t'}\big(F_{t'}^{-1}(q\vert\theta')\vert\theta\big)$ for all $q$, we must have$$
\frac{\pi_t(\theta)}{\pi_t(\theta')}\geq \frac{\pi_{t'}(\theta)}{\pi_{t'}(\theta')} \text{ and } \frac{1-\pi_{t'}(\theta)}{1-\pi_{t'}(\theta')}\geq \frac{1-\pi_t(\theta)}{1-\pi_t(\theta')}.
$$
\end{proof}

\subsection{Proof of \cref{lemma_diff}}

\begin{proof}
    Suppose there is $\theta'\in(\underline{\theta},\overline{\theta})$ such that $\pi_{t'}(\theta')<\pi_t(\theta')$. If $t\succeq_d t'$, then for all $\theta>\theta'$, $$
    \pi_t(\theta)\pi_{t'}(\theta')\geq \pi_t(\theta')\pi_{t'}(\theta).
    $$Adding $\pi_{t'}(\theta)\pi_{t'}(\theta')$ on both sides, we obtain\begin{equation}\label{cond_lemma}
        \pi_{t'}(\theta')(\pi_t(\theta)-\pi_{t'}(\theta))> \pi_{t'}(\theta)(\pi_t(\theta')-\pi_{t'}(\theta'))\Leftrightarrow \frac{\pi_t(\theta)-\pi_{t'}(\theta)}{\pi_t(\theta')-\pi_{t'}(\theta')}> \frac{\pi_{t'}(\theta)}{\pi_{t'}(\theta')},
    \end{equation}
    where we have used that $\pi_t(\theta')-\pi_{t'}(\theta')>0$.

    Test $t$ more difficult than $t'$ also implies$$
(1-\pi_t(\theta))(1-\pi_{t'}(\theta'))\geq (1-\pi_t(\theta'))(1-\pi_{t'}(\theta)).
    $$Rearranging and adding $\pi_{t'}(\theta)\pi_{t'}(\theta')$ on both sides again, we obtain $$
    (1-\pi_{t'}(\theta))(\pi_t(\theta')-\pi_{t'}(\theta'))> (1-\pi_{t'}(\theta'))(\pi_t(\theta)-\pi_{t'}(\theta))\Leftrightarrow \frac{1-\pi_{t'}(\theta)}{1-\pi_{t'}(\theta')}>\frac{\pi_t(\theta)-\pi_{t'}(\theta)}{\pi_t(\theta')-\pi_{t'}(\theta')}.
    $$Together with inequality (\ref{cond_lemma}), we can get $$
    \frac{1-\pi_{t'}(\theta)}{1-\pi_{t'}(\theta')}> \frac{\pi_t(\theta)-\pi_{t'}(\theta)}{\pi_t(\theta')-\pi_{t'}(\theta')}> \frac{\pi_{t'}(\theta)}{\pi_{t'}(\theta')}.
    $$ This implies $\pi_{t'}(\theta')> \pi_{t'}(\theta)$, a contradiction. 
\end{proof}

\subsection{Proof of \cref{prop_bertr}}

\begin{proof}
    In any equilibrium, profits must be weakly positive for otherwise the firm can just set $\alpha(x)=0$ for $x=h,l$ and increase profits.
    
    Suppose first that $\E[\theta]\geq 0$ and suppose that $\alpha(h)<1$. In any symmetric equilibrium $s=(t,\alpha)$, all agents choose either firm with probability $1/2$ and we have $$
    v(s,s,\phi)=\frac{1}{2}\int_{\Theta}\theta\Big(\pi_t(\theta)\alpha(h)+(1-\pi_t(\theta))\alpha(l)\Big)dF>0.
    $$ If one first sets $s'=(t,\alpha')$ with $\alpha'(h)=\alpha(h)+\epsilon$ and leaves the test unchanged, almost all types prefer $s'$ to $s$. The resulting profits are $$
    \int_{\Theta}\theta\Big(\pi_t(\theta)(\alpha(h)+\epsilon)+(1-\pi_t(\theta))\alpha(l)\Big)dF>\frac{1}{2}\int_{\Theta}\theta\Big(\pi_t(\theta)\alpha(h)+(1-\pi_t(\theta))\alpha(l)\Big)dF,
    $$for any $\epsilon>0$. Therefore $\alpha(h)=1$.
    
    If $\alpha(l)<1$, we can set $\alpha'(l)=\alpha(l)+\epsilon$ and leave the test unchanged. Almost all types prefer $s'$ to $s$. The resulting profits are $$
    \int_{\Theta}\theta\Big(\pi_t(\theta)+(1-\pi_t(\theta))(\alpha(l)+\epsilon)\Big)dF>\frac{1}{2}\int_{\Theta} \theta\Big(\pi_t(\theta)+(1-\pi_t(\theta))\alpha(l)\Big)dF,
    $$for $\epsilon$ small enough. We thus get that equilibrium profits are $\frac{1}{2}\E[\theta]$.

    If $\E[\theta]<0$, the same argument holds: as long as profits are strictly positive any firm can increase $\alpha(x)$ and have a strictly profitable deviation. As long as $\int_\Theta \theta\pi_t(\theta)dF>0$, there will also always be an incentive to increase $\alpha(h)$. If $\int_\Theta \theta\pi_t(\theta)dF=0$ then we must have $\alpha(l)=0$ and we could have $\alpha(h)\in[0,1]$ in equilibrium. 
\end{proof}

\subsection{Proof of \cref{theo:characterisation}}

The proof of \cref{theo:characterisation} is in three steps. First, I show that in any equilibrium, the accuracy level must be $\sigma^*$. Second, I show that any equilibrium must have difficulty level equal to $\min\{d\in D:\int_{\Theta}\theta\pi_{\sigma^*,d}(\theta)dF\geq 0\}$. Finally, I show that the strategies described in \cref{theo:characterisation} are indeed an equilibrium.  

I start with the following preliminary lemma on the acceptance rule.

\begin{lemma}\label{lemma:cutoff}
    In any symmetric equilibrium, $s=(t,\alpha)$, the acceptance rule takes a cutoff form: $\alpha(l)>0\Rightarrow\alpha(h)=1$ and $\alpha(h)<1\Rightarrow \alpha(l)=0$.

    Moreover, it is without loss of optimality to consider deviations from symmetric strategy profiles where the acceptance rule takes a cutoff form.
\end{lemma}

\begin{proof}
    Take any selection procedure. Let $\alpha^*$ be the cutoff strategy that solves$$
\alpha(l)\pi_t(l\vert0)+\alpha(h)\pi_t(h\vert0)=\alpha^*(l)\pi_t(l\vert0)+\alpha^*(h)\pi_t(h\vert0).
    $$Such strategy always exists. It is easy to verify that\begin{align*}
        &\text{for }\theta<0,\;\alpha(l)\pi_t(l\vert\theta)+\alpha(h)\pi_t(h\vert\theta)\geq\alpha^*(l)\pi_t(l\vert\theta)+\alpha^*(h)\pi_t(h\vert\theta),\\
        &\text{for }\theta>0,\;\alpha(l)\pi_t(l\vert\theta)+\alpha(h)\pi_t(h\vert\theta)\leq\alpha^*(l)\pi_t(l\vert\theta)+\alpha^*(h)\pi_t(h\vert\theta).
    \end{align*}
    These inequalities are strict if $\alpha$ is not a cutoff strategy and $\pi_t(h\vert\theta)\neq \pi_t(h\vert 0)$. 

    Therefore, all symmetric equilibria must have a cutoff acceptance rule, otherwise the cutoff rule as defined above would constitute a profitable deviation. Moreover, when testing for profitable deviations, the cutoff rule as defined here always makes the deviator at least as well off as the original rule. 
\end{proof}

\begin{lemma}\label{lemma:lehmann}
    In any symmetric equilibrium, the accuracy level is $\sigma^*=\max\Sigma$.
\end{lemma}

\begin{proof}
The proof of this lemma uses the original notion of accuracy from \citet{lehmann1988} as defined in \cref{app_prelim}. Using this definition shortens the argument and shows that its logic does not depend on the binary signal setting.

    For each test $t$, let $$
\Tilde{F}_t(x\vert\theta)=\begin{cases}
    2(1-\pi_t(\theta))x&\text{if }x\in[0,1/2),\\
    1+2\pi_t(\theta)(x-1)&\text{if }x\in[1/2,1].
\end{cases}
$$
First note that for each cutoff strategy $\alpha$ and test $t$, there is a corresponding cutoff $x \in [0,1]$ such that $\alpha(h)\pi_t(\theta)+\alpha(l)(1-\pi_t(\theta))=1-\Tilde{F}_t(x\vert\theta)$. Because each test is monotonic in types, i.e., it has the monotone likelihood ratio property, we have $\Tilde{F}_t(x\vert\theta)\leq \Tilde{F}_t(x\vert\theta')$ for any $\theta'<\theta$ and $x\in [0,1]$.

    Suppose there is a symmetric equilibrium $s'=(\tau(\sigma,d),\alpha)$ with $\sigma<\sigma^*$. Let $t'=\tau(\sigma,d)$ and $x'$ be the corresponding cutoff in the modified test $\Tilde{F}_{t'}$.

    From \cref{prop_bertr}, it must be that firms' profits are zero. Take test $t=\tau(\sigma^*,d)$ and for $\theta=0$, find the cutoff strategy $\alpha^*_\theta$ such that $$
    \alpha(h)\pi_{t'}(\theta)+\alpha(l)(1-\pi_{t'}(\theta))=\alpha^*_\theta(h)\pi_t(\theta)+\alpha^*_\theta(l)(1-\pi_t(\theta)).
    $$  and let $x^*(\theta)$ be the corresponding cutoff in the modified test. Observe that $$
    1-\Tilde{F}_t(x^*(0)\vert \theta)\geq 1-\Tilde{F}_t(x^*(\theta)\vert \theta)=1-\Tilde{F}_{t'}(x'\vert \theta),\text{ for all }\theta\in (0,\overline{\theta}),\footnote{Here, I use \citepos{lehmann1988} original definition, see \cref{app_prelim}.}
    $$where the inequality comes from the definition of accuracy (in \cref{app_prelim}) and the equality from the definition of $x^*(\theta)$. The inequality is reversed for $\theta<0$. Moreover, there must be some types for whom the inequality is strict as ${t'}\prec_a t$. This implies that if one of the firms deviates to $s=(t,\alpha^*_0)$, it achieves strictly positive profits, making them better off than in the candidate equilibrium. To see this formally, let $\Theta_s, \Theta_i,\Theta_{s'}$ be the set of types strictly preferring $s$, indifferent and strictly preferring $s'$. The profits following the deviation to $s$ are 
    
    \begin{align*}
        &\frac{1}{2}\int_{\Theta_i}(1-\Tilde{F}_t(x^*(0)\vert \theta))\theta dF+\int_{\Theta_s}(1-\Tilde{F}_t(x^*(0)\vert \theta))\theta dF\\
        &>\frac{1}{2}\int_{\Theta_i}(1-\Tilde{F}_{t'}(x^*(0)\vert \theta))\theta dF+\int_{\Theta_s}(1-\Tilde{F}_{t'}(x^*(0)\vert \theta))\theta dF\\
        &>\frac{1}{2}\int_{\Theta_i}(1-\Tilde{F}_{t'}(x^*(0)\vert \theta))\theta dF+\int_{\Theta_s}(1-\Tilde{F}_{t'}(x^*(0)\vert \theta))\theta dF\\
        &\quad+\frac{1}{2}\int_{\Theta_{s'}}(1-\Tilde{F}_{t'}(x^*(0)\vert \theta))\theta dF-\frac{1}{2}\int_{\Theta_{s}}(1-\Tilde{F}_{t'}(x^*(0)\vert \theta))\theta dF=0,
    \end{align*}
where the first inequality holds because only positive types strictly prefer $s$ and the second by adding negative terms.\end{proof}
    \begin{lemma}\label{lemma:difficulty}
         If $s=(\tau(\sigma,d),\alpha)$ is a symmetric equilibrium, then $d=\min\{d'\in D: \int_{\Theta}\theta\pi_{\sigma,d'}(\theta)dF\geq 0\}$.
    \end{lemma}

    \begin{proof}
        Suppose there is a symmetric equilibrium $s'=(\tau(\sigma,d'),\alpha')$ with $d'>d=\min\{d'\in D: \int_{\Theta}\theta\pi_{\sigma,d'}(\theta)dF\}$. Take $\Tilde{d}\in (d,d')$ and let $t=\tau(\sigma,\Tilde{d})$ and $t'=\tau(\sigma,d')$. By \cref{assump:technical}, $t$ and $t'$ are not comparable in accuracy and therefore $\frac{\pi_{t}(\theta)}{\pi_{t'}(\theta)}$ and $\frac{\pi_{t}(\theta)}{\pi_{\sigma,d}(\theta)}$ are not constant and thus $\E[\theta\vert t',h]> \E[\theta\vert t,h]>\E[\theta\vert \tau(\sigma,d),h]\geq  0$. Consider a deviation to $s=(t,\alpha)$ with $\alpha(h)=1$ and $\alpha(l)\geq 0$. 

        By \cref{prop_bertr}, profits are zero and $\alpha'(h)=1$ and $\alpha'(l)>0$. To simplify notation, I write $\alpha'$ for $\alpha'(l)$ and $\alpha$ for $\alpha(l)$. 

        Type $\theta$ chooses selection procedure $s$ if\begin{equation}\label{eq:selec_dif1}
            \pi_t(\theta)+\alpha(1-\pi_t(\theta))\geq \pi_{t'}(\theta)+\alpha'(1-\pi_{t'}(\theta)) \Leftrightarrow (1-\alpha)(1-\pi_{t}(\theta))\leq (1-\alpha')(1-\pi_{t'}(\theta)).
        \end{equation}
        
        Because $\frac{1-\pi_{t'}(\theta)}{1-\pi_{t}(\theta)}$ is increasing, if the inequality is satisfied for $\theta$, it is satisfied for all $\theta'>\theta$, i.e., there is positive selection into $s$.

        We now show that there is an $\alpha$ such that $s$ is a profitable deviation. 

        Consider first the case where $\alpha=0$. If inequality (\ref{eq:selec_dif1}) is satisfied for some types, then by the positive selection and $\E[\theta\vert t,h]>0$, $s$ is a profitable deviation. 

        If inequality (\ref{eq:selec_dif1}) is not satisfied for all types, then in particular, it is not satisfied for $\theta=0$, i.e., $$
         (1-\alpha)(1-\pi_{t}(0))>(1-\alpha')(1-\pi_{t'}(0)), \text{at }\alpha=0.
        $$At the same time, inequality (\ref{eq:selec_dif1}) holds for $\theta=0$ when $\alpha=\alpha'(l)$ as $\pi_t(\theta)\geq \pi_{t'}(\theta)$ (\cref{lemma_diff}). Therefore, by the intermediate value theorem, there is $\alpha\in (0,\alpha'(l))$ such that $$
        (1-\alpha)(1-\pi_{t}(0))=(1-\alpha')(1-\pi_{t'}(0)),
        $$ By the positive selection, $s$ is a profitable deviation.        
    \end{proof}

    \begin{lemma}\label{lemma:existence}
        The strategy $s=(\tau(\sigma^*,d^*),\alpha)$ with $\sigma^*=\max \{\sigma\in\Sigma\}$ and 
        $d^*=\min\{d\in D: \int_{\Theta}\theta\pi_{\sigma^*,d}(\theta)dF\geq 0\}$ is a symmetric equilibrium.
    \end{lemma}

    \begin{proof}
        Take a symmetric candidate equilibrium $s=(\tau(\sigma^*,d^*),\alpha)$ and let $t=\tau(\sigma^*,d^*)$. First, observe that $\alpha(h)=1$ in equilibrium. If $\int \theta \pi_{\sigma^*,d^*}(\theta)dF>0$, this follows from \cref{lemma:cutoff} and the fact that payoff must be zero in equilibrium. If $\int \theta \pi_{\sigma^*,d^*}(\theta)dF=0$, then we could have $\alpha(h)\in [0,1]$ and $\alpha(l)=0$. But then there would be a deviation to some test $\tau(\sigma^*,d')$ with $d'>d$ with $\alpha'(h)=1$, $\alpha'(l)=0$. To see this, observe that $\pi_{\sigma^*,d^*}(\theta)>\alpha(h)\pi_{\sigma^*,d^*}(\theta)$ for all $\theta\in (\underline{\theta},\overline{\theta})$ and therefore by continuity in $d$, there is $d'$ with $\pi_{\sigma^*,d'}(\theta)>\alpha(h)\pi_{\sigma^*,d^*}(\theta)$ for some $\theta$. Moreover, there is positive selection into $\tau(\sigma^*,d')$ as $\frac{\pi_{\sigma^*,d'}(\theta)}{\pi_{\sigma^*,d^*}(\theta)}$ is increasing. Combined with the fact that $\int_\Theta \theta\pi_{\sigma^*,d'}(\theta)dF>0$ by \cref{prop_charac_dif}, $s'=(\tau(\sigma^*,d'),\alpha')$ is a profitable deviation.
        
        Set $\alpha(l)\geq 0$ and $\alpha(h)=1$. To simplify notation, write $\alpha(l)=\alpha$. 

        Take a deviation to $t'=\tau(\sigma,d)$ with $\sigma\leq \sigma^*$.

        First, suppose that selection procedure $(\tau(\sigma,d),\alpha')$ with $\sigma<\sigma^*$ is a profitable deviation. In that case, $(\tau(\sigma^*,d),\alpha'')$ where $\alpha''$ is such that positive types have a higher probability of being accepted and negative types a lower one is also a profitable deviation. Such $\alpha''$ exists using the same reasoning as in the proof of \cref{lemma:difficulty}.

        Take test $t'=\tau(\sigma^*,d)$ with $d>d^*$. We have $\tau(\sigma^*,d)\succeq_d \tau(\sigma^*,d^*)$ and therefore for all $\theta,\theta'\in (\underline{\theta},\overline{\theta})$ with $\theta>\theta'$,$$
        \frac{1-\pi_{\sigma^*,d^*}(\theta)}{1-\pi_{\sigma^*,d^*}(\theta')}\leq \frac{1-\pi_{\sigma^*,d}(\theta)}{1-\pi_{\sigma^*,d}(\theta')}. 
        $$

        Note that any deviation to $s'=(\tau(\sigma^*,d),\alpha')$ must have $\alpha'(l)>\alpha(l)\geq 0$ as $\pi_{\sigma^*,d^*}(\theta)\geq \pi_{\sigma^*,d}(\theta)$ for all $\theta$. Otherwise, no type would choose $s'$.

        Let $\Delta u(\theta)$ denote the difference in probability of being accepted of type $\theta$ between $s'$ and $s$:\begin{align*}
             \Delta u(\theta)&=\pi_{t'}(\theta)+\alpha'(1-\pi_{t'}(\theta))-\pi_t(\theta)-\alpha(1-\pi_t(\theta))\\
             &=(1-\pi_{t'}(\theta))(\alpha'-1+(1-\alpha)\frac{1-\pi_t(\theta)}{1-\pi_{t'}(\theta)}).
        \end{align*}
        Both $1-\pi_{t'}(\theta)$ and $\alpha'-1+(1-\alpha)\frac{1-\pi_t(\theta)}{1-\pi_{t'}(\theta)}$ are decreasing functions of $\theta$. Therefore, whenever $\alpha'-1+(1-\alpha)\frac{1-\pi_t(\theta)}{1-\pi_{t'}(\theta)}\geq 0$, and thus $\Delta u(\theta)\geq 0$, $\Delta u(\theta)$ is decreasing as the product of two positive decreasing functions is decreasing. Moreover, the function $\Delta u(\theta)$ is single crossing from above.

        Let $\Theta_{s'},\Theta_i$ and $\Theta_s$ be the set of types for which $\Delta u(\theta)>0$, $=0$ and $<0$.

        The profits from $s'$ are \begin{align*}
            &\int_{\Theta_{s'}}(\pi_{t'}(\theta)+\alpha'(1-\pi_{t'}(\theta))\theta dF+\frac{1}{2}\int_{\Theta_i}(\pi_{t'}(\theta)+\alpha'(1-\pi_{t'}(\theta))\theta dF
            \\&\quad-\int_{\Theta}(\pi_{t}(\theta)+\alpha(1-\pi_{t}(\theta))\theta dF\\
            &=\int_{\Theta_{s'}}\theta\Delta u(\theta)dF+\frac{1}{2}\int_{\Theta_{i}}\theta\Delta u(\theta)dF\\
            &-\frac{1}{2}\left(\int_{\Theta_i}(\pi_{t}(\theta)+\alpha(1-\pi_{t}(\theta))\theta dF+\int_{\Theta_s}(\pi_{t}(\theta)+\alpha(1-\pi_{t}(\theta))\theta dF\right)\\
            &-\frac{1}{2}\int_{\Theta_s}(\pi_{t}(\theta)+\alpha'(1-\pi_{t}(\theta))\theta dF,
        \end{align*}
        where we have added the profits $s$ which are zero on the first line and rearranged to obtain the inequality. Now note that if $s'$ is a profitable deviation, it must be that $\theta=0\in \Theta_{s'}$ or $\in \Theta_i$, i.e., some positive types must be choosing $s'$. This implies that all types in $\Theta_s$ must be positive. We can now prove that the profits from $s'$ are negative. 

        The last term is negative as all types in $\Theta_s$ are positive. The second last term is negative as it is the profits from equilibrium after having removed types at the bottom. The term $\int_{\Theta_i}\theta\Delta u(\theta)dF=0$ by definition of $\Theta_i$. Finally, because on $\Theta_{s'}$, $\Delta u(\theta)$ is decreasing, we have$$\int_{\Theta_{s'}}\theta\Delta u(\theta)dF\leq \int_{\Theta_{s'}}\theta\Delta u(0)dF\leq 0,
        $$using that $\E[\theta]<0$.

        Now consider a deviation to $s'=(\tau(\sigma^*,d),\alpha')$ with $d<d^*$. If such test is feasible, it must be that $\int_\Theta \pi_{\sigma^*,d^*}(\theta)\theta dF=0$ (by continuity of $\pi_{\sigma,d}$) and $\int_\Theta \pi_{\sigma^*,d}(\theta)\theta dF<0$.

        The candidate equilibrium $s=(\tau(\sigma^*,d^*),\alpha)$ has $\alpha(l)=0$

        Let $t'=\tau(\sigma^*,d)$. If $\alpha'(l)> 0$, then all types choose $s'$ as $\pi_{d,\sigma^*}(\theta)\geq \pi_{d^*,\sigma^*}$. But $\E[\theta\vert t',h]<0$ and therefore payoffs from $s'$ are negative. Therefore $\alpha'(l)=0$.

        Let $\Delta u(\theta)$ the difference in probability of acceptance between $s'$ and $s$ for type $\theta$:$$
        \Delta u(\theta)=\alpha'(h)\pi_{t'}(\theta)-\pi_t(\theta)=\pi_{t'}(\theta)(\alpha'(h)-\frac{\pi_t(\theta)}{\pi_{t'}(\theta)}).
        $$The function $\Delta u(\theta)$ is single-crossing from above and therefore there is negative selection into $s'$. Combined with the fact that $\E[\theta\vert t',h]<0$, the deviation is not profitable.
    \end{proof}

\subsection{Proof of \cref{prop:comparative_FOSD}}

Let $(\alpha,\sigma,d)$ and $(\alpha',\sigma',d')$ be the equilibrium strategy under $(F,\Sigma\times D)$ and $(F',\Sigma\times D)$. 

Using standard arguments, the LR ordering implies that for any $\alpha(l)$,$$
\frac{\int\theta(\pi_{\sigma,d}(\theta)+\alpha(l)(1-\pi_{\sigma,d}(\theta)))dF}{\int(\pi_{\sigma,d}(\theta)+\alpha(l)(1-\pi_{\sigma,d}(\theta)))dF}\leq \frac{\int\theta(\pi_{\sigma,d}(\theta)+\alpha(l)(1-\pi_{\sigma,d}(\theta)))dF'}{\int(\pi_{\sigma,d}(\theta)+\alpha(l)(1-\pi_{\sigma,d}(\theta)))dF'}.
$$

If $d=\min D=\min\{\Tilde{d}:\int\theta\pi_{\sigma,\Tilde{d}}(\theta)dF\geq 0\}$, then, by the likelihood ratio (LR) ordering, $d'=\min D=\min\{\Tilde{d}:\int\theta\pi_{\sigma,\Tilde{d}}(\theta)dF'\geq 0\}$. By \cref{theo:characterisation}, we also have $\sigma=\sigma'$.

Because $$
0=\frac{\int\theta(\pi_{\sigma,d}(\theta)+\alpha(l)(1-\pi_{\sigma,d}(\theta)))dF}{\int(\pi_{\sigma,d}(\theta)+\alpha(l)(1-\pi_{\sigma,d}(\theta)))dF}\leq \frac{\int\theta(\pi_{\sigma,d}(\theta)+\alpha(l)(1-\pi_{\sigma,d}(\theta)))dF'}{\int(\pi_{\sigma,d}(\theta)+\alpha(l)(1-\pi_{\sigma,d}(\theta)))dF'},
$$
and $$
0=\int\theta(\pi_{\sigma,d}(\theta)+\alpha'(l)(1-\pi_{\sigma,d}(\theta))dF',
$$ 
we must have $\alpha(h)=\alpha'(h)=1$ and $\alpha(l)\leq \alpha'(l)$. Therefore, $p^*(\theta;F',\Sigma\times D)\geq p^*(\theta;F,\Sigma\times D)$.

Suppose that $d=\min\{\Tilde{d}:\int\theta\pi_{\sigma,\Tilde{d}}(\theta)dF\geq 0\}>\min D$. Then by the LR ordering, $d'=\min\{\Tilde{d}:\int\theta\pi_{\sigma,\Tilde{d}}(\theta)dF'\geq 0\}\leq d$. Again by \cref{theo:characterisation}, we have $\sigma'=\sigma$.

We also have $\int\theta\pi_{\sigma,d}(\theta)dF=0$ and therefore $\alpha(l)=0$ and $\alpha(h)=1$. Furthermore, for all types, $\pi_{\sigma,d}(\theta)\leq \pi_{\sigma,d'}(\theta)$ and therefore, $p^*(\theta;F',\Sigma\times D)\geq p^*(\theta;F,\Sigma\times D)$.

\subsection{Proof of \cref{prop:comparative_tech}}

Let $(\alpha,\sigma,d)$ and $(\alpha',\sigma',d')$ be the equilibrium strategy under $(F,\Sigma\times D)$ and $(F,\Sigma'\times D')$. 

By \cref{theo:characterisation}, we have $\sigma'\geq \sigma$ and $d'\leq d$. 

Therefore, we have for all $\theta'>\theta$, \begin{align*}
&\frac{(1-\alpha(l))(1-\pi_{\sigma,d}(\theta'))}{(1-\alpha(l))(1-\pi_{\sigma,d}(\theta))} \geq
\frac{(1-\alpha'(l))(1-\pi_{\sigma',d'}(\theta'))}{(1-\alpha'(l))(1-\pi_{\sigma',d'}(\theta))}\\
\Leftrightarrow \;&\frac{1-\pi_{\sigma,d}(\theta')-\alpha(l)(1-\pi_{\sigma,d}(\theta'))}{1-\pi_{\sigma,d}(\theta)-\alpha(l)(1-\pi_{\sigma,d}(\theta))}\geq \frac{1-\pi_{\sigma',d'}(\theta')-\alpha'(l)(1-\pi_{\sigma',d'}(\theta'))}{1-\pi_{\sigma',d'}(\theta)-\alpha'(l)(1-\pi_{\sigma',d'}(\theta))}.    
\end{align*}

Define the tests $\pi_{t}(\theta)=\pi_{\sigma,d}(\theta)+\alpha(l)(1-\pi_{\sigma,d}(\theta))$ and $\pi_{t'}(\theta)=\pi_{\sigma',d'}(\theta)+\alpha'(l)(1-\pi_{\sigma',d'}(\theta))$. By \cref{prop_charac_dif}, we have $\E[\theta\vert t',l]\leq \E[\theta\vert t,l]$. 

At the same time, by \cref{prop_bertr}, the zero profits condition holds in equilibrium and$$
\int\theta(\pi_{\sigma,d}(\theta)+\alpha(l)(1-\pi_{\sigma,d}(\theta)))dF=\int\theta(\pi_{\sigma',d'}(\theta)+\alpha'(l)(1-\pi_{\sigma',d'}(\theta)))dF,
$$and therefore$$
\int \theta(1-\alpha(l))(1-\pi_{\sigma,d}(\theta))dF=\int \theta(1-\alpha'(l))(1-\pi_{\sigma',d'}(\theta))dF.
$$
We get\begin{align*}
    &\frac{\int \theta(1-\alpha(l))(1-\pi_{\sigma,d}(\theta))dF}{\int (1-\alpha(l))(1-\pi_{\sigma,d}(\theta))dF}\geq\frac{\int \theta(1-\alpha(l))(1-\pi_{\sigma',d'}(\theta))dF}{\int (1-\alpha(l))(1-\pi_{\sigma',d'}(\theta))dF}\\
    \Leftrightarrow\;&\int (1-\alpha(l))(1-\pi_{\sigma,d}(\theta))dF\geq \int (1-\alpha(l))(1-\pi_{\sigma',d'}(\theta))dF\\
    \Leftrightarrow \;&\int (\pi_{\sigma,d}(\theta)+\alpha(l)(1-\pi_{\sigma,d}(\theta)))dF\leq \int (\pi_{\sigma',d'}(\theta)+\alpha(l)(1-\pi_{\sigma',d'}(\theta)))dF, 
\end{align*}
where on the second line we use the fact that $\E[\theta\vert t',l]\leq \E[\theta\vert t,l]<0$. Therefore, $\E[p^*(\theta;F,\Sigma\times D)]\leq \E[p^*(\theta;F,\Sigma'\times D')]$.

The single-crossing condition can be established as follows\begin{align*}
    &\pi_{t'}(\theta)-\pi_t(\theta)\geq 0\\
    \Leftrightarrow\;&1-\pi_t(\theta)-(1-\pi_{t'}(\theta))\geq 0\\
    \Leftrightarrow\; &\frac{1-\alpha(l)}{1-\alpha'(l)}\frac{1-\pi_{\sigma,d}(\theta)}{1-\pi_{\sigma',d'}(\theta)}-1\geq 0.
\end{align*}
The expression on the LHS is increasing in $\theta$ and therefore the single-crossing condition holds.

\subsection{Proof of \cref{prop_capacity}}

\begin{proof}
    We show that the strategy $s=(\tau(\overline{d}),\alpha)$ with $\alpha(h)=1$, $\alpha(l)=0$ is an equilibrium. For simplicity, let $t=\tau(\overline{d})$. Equilibrium payoffs for both firms are $$
    k\cdot \E[\theta\vert t,h],
    $$
    given the assumption on the capacity constraints and that by \cref{lemma_diff}, $\pi_{\overline{d}}(\theta)\leq \pi_d(\theta)$ for all $\theta\in \Theta$ and $d\in D$.
    
    Consider a deviation of firm 1 to $s'=(t'=\tau(d'),\alpha')$ with $d'<\overline{d}$. Let $p_i$ denote the probability that a given type has its application considered by firm $i$. Suppose first that $\alpha'(l)=0$. For simplicity, let $\alpha'(h)=\alpha'$. I will first show that selection into firm 1 is negative. The agent's utility difference between firm 1 and 2 is $$
    \Delta u(\theta)=p_1\alpha'\pi_{t'}(\theta)-p_2\pi_t(\theta).
    $$
    Using that $\frac{\pi_t(\theta)}{\pi_{t'}(\theta)}$ is increasing in $\theta$, we can establish that selection is negative. Let $\Theta_{t'}$ and $\Theta_i$ be the sets of types for which $\Delta u(\theta)>0$ and $\Delta u(\theta)=0$. If the capacity constraint is binding, then $$
    k\cdot \E[\theta\vert t',h,\theta\text{ chooses 1}]\leq k\cdot\E[\theta\vert t',h]\leq k\cdot\E[\theta\vert t,h].
    $$ If the capacity constraint is not binding, i.e., $k\geq p_1$, the payoffs from deviating are\begin{align*}
    \int_{\Theta_{t'}}\theta \alpha'\pi_{t'}(\theta)dF+\frac{1}{2}\int_{\Theta_i}\theta\alpha'\pi_{t'}(\theta)dF&\leq \frac{k}{p_1}\big[\int_{\Theta_{t'}}\theta \alpha'\pi_{t'}(\theta)dF+\frac{1}{2}\int_{\Theta_i}\theta\alpha'\pi_{t'}(\theta)dF\big]\\
    &=k\cdot \E[\theta\vert t',h,\theta\text{ chooses 1}]\\
    &\leq k\cdot\E[\theta\vert t',h]\leq k\cdot\E[\theta\vert t,h],
    \end{align*}
where we use that $k\geq p_1$ on the first line, the negative selection and \cref{prop_charac_dif} on the third line.

    Now suppose that $\alpha'(l)>0$. For simplicity let $\alpha'(l)=\alpha'$. First observe that to have a profitable deviation, it must be $p_2/p_1>1$, otherwise no type would choose firm 2 as $\pi_{t'}(\theta)\geq \pi_t(\theta)$. But in that case firm 1's capacity constraint binds and its profits are lower than under the equilibrium profits as $\E[\theta\vert t,h]\geq \E[\theta\vert t',h]$. Let $p=p_2/p_1$ and $$
    \Delta u(\theta)=p_1(\pi_{t'}(\theta)+\alpha'(1-\pi_{t'}(\theta)))-p_2\pi_t(\theta)=\pi_t(\theta)\left(p_1(1-\alpha')\frac{\pi_{t'}(\theta)}{\pi_t(\theta)}+\frac{p_1\alpha'}{\pi_t(\theta)}-p_2\right).
    $$
    The term in brackets is negative whenever $\Delta u\leq 0$ and it is decreasing while $\pi_t(\theta)$ is increasing and positive. Therefore, whenever $\Delta u(\theta)\leq 0$, $\Delta u(\theta)$ is decreasing. This shows there is negative selection into $s'$ and the acceptance probability satisfies decreasing differences for the types choosing $s'$. By a similar argument as above, the deviation cannot be profitable. 
\end{proof}

\subsection{Proof of \cref{prop_two}}

I show that there is an equilibrium where firm 1 chooses selection procedure $(\overline{t},\alpha(h)=1, \alpha(l)=0)$ and firm 2 chooses $(t,\alpha(h)=1, \alpha(l)=0)$. To simplify notation, I denote by $\overline{\pi}_i$ and $\underline{\pi}_i$ the probability of $\overline{\theta}$ and $\underline{\theta}$ generate the high signal in the test chosen by firm $i$.

In the suggested equilibrium, type $\overline{\theta}$ chooses firm 1 and type $\underline{\theta}$ mixes between firm 1 and firm 2. Denote by $\phi$ the probability that type $\underline{\theta}$ chooses firm 2. Both firms' capacities are binding. 

Type $\underline{\theta}$ is willing to mix between firm 1 and firm 2 if$$
\frac{k}{(1-\mu)\phi\underline{\pi}_2}\underline{\pi}_2=\frac{k}{\mu\overline{\pi}_1+(1-\mu)(1-\phi)\underline{\pi}_1}\underline{\pi}_1.
$$
Solving for $\phi$, we get $\phi=\frac{\mu\overline{\pi}_1+(1-\mu)\underline{\pi}_1}{2(1-\mu)\underline{\pi}_1}$. We have $\phi\leq 1\Leftrightarrow \mu\overline{\pi}_1\leq (1-\mu)\underline{\pi}_1$. This inequality corresponds to the second condition in \cref{prop_two}. Note also that we have $\phi\geq \frac{1}{2(1-\mu)}$.

Type $\overline{\theta}$ prefers firm 1 if \begin{equation}\label{eq:IC_high}
    \frac{k}{\mu\overline{\pi}_1+(1-\mu)(1-\phi)\underline{\pi}_1}\overline{\pi}_1\geq \frac{k}{(1-\mu)\phi\underline{\pi}_2}\overline{\pi}_2 \Leftrightarrow \frac{\overline{\pi}_1}{\underline{\pi}_1}\geq \frac{\overline{\pi}_2}{\underline{\pi}_2}.
\end{equation}
This inequality is always satisfied by definition of $\overline{t}$, the test chosen by firm 1.

The capacity constraints bind if \begin{align*}
    &\text{Firm 1: }\mu\overline{\pi}_1+(1-\mu)(1-\phi)\underline{\pi}_1\geq k,\\
    &\text{Firm 2: }(1-\mu)\phi\underline{\pi}_2\geq k.
\end{align*}
Plugging in the value of $\phi$ for the first inequality, we obtain the third condition in \cref{prop_two}. The second inequality corresponds to the fourth inequality of \cref{prop_two} using that $\phi\geq \frac{1}{2(1-\mu)}$.

Let's now consider whether firms have profitable deviations.\footnote{Note that there always exists a continuation equilibrium where each type mixes with the same probability as each candidate's utility is linear in their action and continuous in the probability each type chooses each firm.} First we check whether firm 2 has a profitable deviation. 

The first observation is that firm 2 must make positive profits in equilibrium. This is the case only if $\underline{\theta}\geq 0$ (first condition in \cref{prop_two}).

Consider a deviation to $s'=(t',\alpha')$. If firm 2 deviates, there are two possibilities. Either its capacity constraint binds following the deviation or it does not. 

If firm 2's capacity constraint still binds in the continuation equilibrium, then we still have that $\underline{\theta}$ mixes between firm 1 and 2 using the same strategy (that only depended on the strategy of firm 1). Type $\overline{\theta}$ would want to deviate to firm 2 if$$
\frac{\overline{\pi}_1}{\underline{\pi}_1}<\frac{\alpha'(h)\pi_{t'}(\overline{\theta})+\alpha'(l)(1-\pi_{t'}(\overline{\theta}))}{\alpha'(h)\pi_{t'}(\underline{\theta})+\alpha'(l)(1-\pi_{t'}(\underline{\theta}))},
$$using the same calculations as (\ref{eq:IC_high}). The inequality above is never satisfied because$$
\frac{\overline{\pi}_1}{\underline{\pi}_1}\geq \frac{\pi_{t'}(\overline{\theta})}{\pi_{t'}(\underline{\theta})}\geq\frac{\alpha'(h)\pi_{t'}(\overline{\theta})+\alpha'(l)(1-\pi_{t'}(\overline{\theta}))}{\alpha'(h)\pi_{t'}(\underline{\theta})+\alpha'(l)(1-\pi_{t'}(\underline{\theta}))},
$$ where the first inequality follows from the definition of $\overline{t}$, the test used by firm 1. The inequality above also shows that there is negative selection into firm 2's selection procedure. Therefore, firm 2's profits following such deviation are still $k\cdot \underline{\theta}$, the same as the equilibrium profits.

If firm 2's capacity constraint does not bind following the deviation, because there is negative selection into the selection procedure of firm 2, at most type $\underline{\theta}$ applies to firm 2. Indeed, by the second condition of \cref{prop_two}, $(1-\mu)\underline{\pi}_1\geq \mu\overline{\pi}_1$, we obtain $1-\mu\geq 1/2$. Therefore $(1-\mu)\underline{\pi}_2\geq \frac{\underline{\pi}_2}{2}\geq k$. But then the deviating payoff is lower than the equilibrium payoffs as $k\underline{\theta}\geq (1-\mu)Pr[\underline{\theta}\text{ chooses firm 2}]\underline{\theta}$.

We now need to verify that firm 1 does not have any profitable deviation. Again, we need to distinguish the cases where the capacity constraint is binding or not in the continuation equilibrium.

Consider a deviation to $s'=(t',\alpha')$. Let $\overline{q}=\alpha'(h)\pi_{t'}(\overline{\theta})+\alpha'(l)(1-\pi_{t'}(\overline{\theta}))$, the acceptance probability under $s'$ of $\overline{\theta}$ and define $\underline{q}$ similarly.

First, there is never a profitable deviation where $\frac{\overline{q}}{\underline{q}}<\frac{\overline{\pi}_2}{\underline{\pi}_2}$. If this is the case, there is negative selection into firm 1's selection procedure so the best-case scenario is that all types choose firm 1. The payoffs are then\begin{align*}
    &\min\{1,\frac{k}{\mu\overline{q}+(1-\mu)\underline{q}}\}(\mu\overline{q}\overline{\theta}+(1-\mu)\underline{q}\underline{\theta})\\
    &\leq \frac{k}{\mu\overline{q}+(1-\mu)\underline{q}}(\mu\overline{q}\overline{\theta}+(1-\mu)\underline{q}\underline{\theta})\\
    &\leq \frac{k}{\mu\overline{q}+(1-\mu)(1-\phi)\underline{q}}(\mu\overline{q}\overline{\theta}+(1-\mu)(1-\phi)\underline{q}\underline{\theta})\\
    &\leq \frac{k}{\mu\overline{\pi}_1+(1-\mu)(1-\phi)\underline{\pi}_1}\big(\mu\overline{\pi}_1\overline{\theta}+(1-\mu)(1-\phi)\underline{\pi}_1\underline{\theta}\big),
\end{align*}
using that $\frac{\mu}{(1-\mu)(1-\phi)}\geq \frac{\mu}{1-\mu}$ on the third line and $\frac{\overline{\pi}_1}{\underline{\pi}_1}\geq \frac{\overline{q}}{\underline{q}}$ on the fourth.

If$$
\frac{\mu \overline{q}+(1-\mu)\underline{q}}{2}\geq k.
$$Then the same equilibrium as on-path where $\underline{\theta}$ mixes holds as long as $\frac{\overline{q}}{\underline{q}}\geq\frac{\overline{\pi}_2}{\underline{\pi}_2}$. Moreover, we necessarily have that $$
\frac{\overline{\pi}_1}{\underline{\pi}_1}\geq \frac{\overline{q}}{\underline{q}}. 
$$Let $\phi_q=\frac{\mu\overline{q}+(1-\mu)\underline{q}}{2(1-\mu)\underline{q}}$ be the probability of type $\underline{\theta}$ to choose firm 2 in the continuation equilibrium. The strategy $\phi_q$ is decreasing in the likelihood ratio $\frac{\overline{q}}{\underline{q}}$ and therefore this deviation is not profitable for firm 1 as it attracts more $\underline{\theta}$.

Suppose now that \begin{equation}\label{eq:cond_k}
    \mu\overline{q}+(1-\mu)(1-\phi_q)\underline{q}=\frac{\mu \overline{q}+(1-\mu)\underline{q}}{2}< k.
\end{equation}

We cannot have both firms' capacity constraints bind as the equilibrium involving type $\underline{\theta}$ mixing does not make the capacity constraint bind. Furthermore $\mu\overline{q}<\mu\overline{q}+(1-\mu)(1-\phi_q)\underline{q}<k$, therefore, if $\underline{\theta}$ only applies to firm 1, the capacity constraint does not bind either. If all types apply to firm 1 with probability one, then the capacity constraint of firm 2 does not bind. Therefore, either only one firm's capacity constraint is binding or neither.

If only firm 1's capacity constraint binds, then if $\Tilde{\phi}$ is the probability of $\underline{\theta}$ choosing firm 2,\footnote{Again, if only type $\overline{\theta}$ is choosing firm 1, then the capacity constraint is not binding.} then$$
\mu\overline{q}+(1-\mu)(1-\phi_q)\underline{q}<k\leq\mu\overline{q}+(1-\mu)(1-\Tilde{\phi})\underline{q}.
$$This inequality implies that $(1-\Tilde{\phi})>(1-\phi_q)>(1-\phi)$ and therefore\begin{align*}
&\frac{k}{\mu\overline{\pi}_1+(1-\mu)(1-\phi)\underline{\pi}_1}\big(\mu\overline{\pi}_1\overline{\theta}+(1-\mu)(1-\phi)\underline{\pi}_1\underline{\theta}\big)\\
&\quad\quad\quad\geq\frac{k}{\mu\overline{q}+(1-\mu)(1-\phi)\underline{q}}\big(\mu\overline{q}\overline{\theta}+(1-\mu)(1-\phi)\underline{q}\underline{\theta}\big)\\
&\quad\quad\quad>\frac{k}{\mu\overline{q}+(1-\mu)(1-\Tilde{\phi})\underline{q}}\big(\mu\overline{q}\overline{\theta}+(1-\mu)(1-\Tilde{\phi})\underline{q}\underline{\theta}\big),  
\end{align*}
using that $\frac{\overline{\pi}_1}{\underline{\pi}_1}\geq \frac{\overline{q}}{\underline{q}}$ on the second line and $\frac{\mu}{(1-\mu)(1-\phi)}>\frac{\mu}{(1-\mu)(1-\Tilde{\phi})}$ on the third. 

If firm 1's capacity constraint does not bind, any continuation equilibrium where only $\overline{\theta}$ chooses firm 1 is not a profitable deviation as\begin{align*}
\overline{\phi}\mu\overline{q}\overline{\theta}&\leq \mu\overline{q}\overline{\theta}+ (1-\mu)(1-\phi_q)\underline{q}\underline{\theta}\\
    &<\frac{k}{\mu\overline{q}+(1-\mu)(1-\phi_q)\underline{q}}\big(\mu\overline{q}\overline{\theta}+ (1-\mu)(1-\phi_q)\underline{q}\underline{\theta}\big)\\
    &\leq \frac{k}{\mu\overline{\pi}_1+(1-\mu)(1-\phi)\underline{\pi}_1}\big(\mu\overline{\pi}_1\overline{\theta}+(1-\mu)(1-\phi)\underline{\pi}_1\underline{\theta}\big),
\end{align*}
where we use that $k>\mu\overline{q}+(1-\mu)(1-\phi_q)\underline{q}$ on the second line.

If all types choose firm 1 with probability one, this is not a profitable deviation either as\begin{align*}
    \mu\overline{q}\overline{\theta}+ (1-\mu)\underline{q}\underline{\theta}&\leq \frac{k}{\mu\overline{q}+ (1-\mu)(1-\phi_q)\underline{q}}\big(\mu\overline{q}\overline{\theta}+ (1-\mu)(1-\phi_q)\underline{q}\underline{\theta}\big)\\
    &\leq  \frac{k}{\mu\overline{\pi}_1+(1-\mu)(1-\phi)\underline{\pi}_1}\big(\mu\overline{\pi}_1\overline{\theta}+(1-\mu)(1-\phi)\underline{\pi}_1\underline{\theta}\big),
\end{align*}
where we use that $\frac{k}{\mu\overline{q}+ (1-\mu)(1-\phi_q)\underline{q}}\geq 1$ as the capacity constraint does not bind.

Therefore, type $\underline{\theta}$ must be mixing. If both firms' capacity constraints do not bind then $\underline{\pi}_2=\underline{q}$. But $\underline{\pi}_2/2\geq k$, a contradiction with firm 2's capacity constraint not binding.

Thus firm 2's capacity constraint must bind and, setting $\Tilde{\phi}$ as the probability $\underline{\theta}$ chooses firm 2, $$
\underline{q}=\frac{k}{(1-\mu)\Tilde{\phi}\underline{\pi}_2}\underline{\pi}_2\Leftrightarrow \Tilde{\phi}=\frac{k}{(1-\mu)\underline{q}}.
$$In equilibrium, we must have $\frac{k}{(1-\mu)\Tilde{\phi}\underline{\pi}_2}\leq 1\Leftrightarrow \underline{\pi}_2\leq \underline{q}$. Therefore, combined with the fourth condition of \cref{prop_two}, we have $\frac{\underline{q}}{2}\geq k$.

Now, we can rearrange (\ref{eq:cond_k}) to obtain\begin{align*}
    & k>\frac{\mu \overline{q}+(1-\mu)\underline{q}}{2}\geq \frac{\mu\overline{q}}{2}+(1-\mu)k\\    \Rightarrow&\;k>\frac{\overline{q}}{2}.
\end{align*}
Therefore we have $\frac{\underline{q}}{2}\geq k>\frac{\overline{q}}{2}$, a contradiction.

\subsection{Proof of \cref{prop_money}}

\begin{proof}    
    In any symmetric equilibrium, both firms get zero profits. If it is not the case and a firm gets strictly positive profits, then there is at least one signal at which firms get positive profits. Then one of them can raise the transfer by $\epsilon$ and attract all agents for an arbitrarily small increase.

    First, I show that there is no `cross-subsidisation' in equilibrium, i.e., $\alpha(l)=0$. 

    Suppose it is not the case. Let $s=(t,\alpha,m)$ be the selection procedure in equilibrium. Let $m(h)=m_h$ and $m(l)=m_l$. First note that if $m_l=0$, then any firm can decrease $\alpha(l)$ and increase its profits. Because $m_l=0$, this does not change the payoffs of the agents. So $m_l>0$. We have$$
    \int_\Theta \pi_t(\theta)\alpha_h(\theta-m_h)+(1-\pi_t(\theta))\alpha_l(\theta-m_l)dF=0.
    $$Consider the following deviation $s'$ that leaves all aspects of the selection procedure unchanged except that $m'(h)=m_h+\epsilon$ and $m'(l)=m_l-\delta$ with $\epsilon,\delta>0$ such that $$
    \int_\Theta \pi_t(\theta)\alpha_h(\theta-m_h-\epsilon)+(1-\pi_t(\theta))\alpha_l(\theta-m_l+\delta)dF=0.
    $$
    We choose $\epsilon,\delta$ small enough such that $m'(l)>0$. I will show that some types will choose the deviating firm and that the deviation will exhibit positive selection. This will imply that the deviation is profitable. 

    Type $\theta$ chooses $s'$ if \begin{gather*}
        \pi_t(\theta)\alpha_h(m_h+\epsilon)+(1-\pi_t(\theta))\alpha_l(m_l-\delta)\geq \pi_t(\theta)\alpha_hm_h+(1-\pi_t(\theta))\alpha_lm_l\\
        \Leftrightarrow\pi_t(\theta)\alpha_h\epsilon\geq (1-\pi_t(\theta))\alpha_l\delta\\
        \Leftrightarrow \frac{\pi_t(\theta)}{1-\pi_t(\theta)}\alpha_l\epsilon\geq \alpha_l\delta.
    \end{gather*}
    Because the LHS is increasing in $\theta$, there is positive selection into $s'$.
    
   Substituting for $\delta$ as a function of $\epsilon$ and the fact that the original profits are zero, we get that type $\theta$ chooses $s'$ if$$
    \pi_t(\theta)\int_\Theta (1-\pi_t(\theta))dF\leq (1-\pi_t(\theta)) \int_\Theta \pi_t(\theta)dF. 
    $$This has to hold for some types, as otherwise we have $\pi_t(\theta)\int_\Theta (1-\pi_t(\theta))dF> (1-\pi_t(\theta)) \int_\Theta \pi_t(\theta)dF$ for all $\theta$ which implies $\int_\Theta\pi_t(\theta)dF\cdot\int_\Theta (1-\pi_t(\theta))dF> \int_\Theta (1-\pi_t(\theta))dF\cdot \int_\Theta \pi_t(\theta)dF$, a contradiction. Using the same argument, some types must prefer the original selection procedure $s$. Therefore, we get positive selection into the new selection procedure. Since, absent positive selection, the profits are zero, this must be a strictly profitable deviation.
    
    Let $\overline{t}=\tau(\overline{d})$. Suppose there is a symmetric equilibrium with $t\prec_d\overline{t}$ where $t=\tau(d)$ for some $d<\overline{d}$. 

    In equilibrium, it must be that $m(l)=0$ and from the zero profits condition, $m(h)=m=\frac{\int_\Theta \theta\pi_t(\theta)dF}{\int_\Theta\pi_t(\theta)dF}$.

    Take some $\epsilon\in(0,\frac{\pi_t(0)-\pi_{\overline{t}}(0)}{\pi_{\overline{t}}(0)})$. We want to find $t'$ with $t\prec_d t'\prec_d \overline{t}$ such that $$
    m\pi_t(0)=m(1+\epsilon)\pi_{t'}(0).
    $$For $t'=t$, we have $m\pi_t(0)<m(1+\epsilon)\pi_{t'}(0)$ and for $t'=\overline{t}$ we have $m\pi_t(0)>m(1+\epsilon)\pi_{t'}(0)$, using the bound on $\epsilon$. Because $D$ is an interval and the continuity assumption (\cref{assump:technical}), by the intermediate value theorem, there is $d'$ and $t'=\tau(d')$ with $t\prec_d t'\prec_d \overline{t}$ such that $m\pi_t(0)=m(1+\epsilon)\pi_{t'}(0)$. 

    We want to show that for $\epsilon$ small enough, this constitutes a profitable deviation, i.e., $$
    \int_0^{\overline{\theta}}\pi_{t'}(\theta)(\theta-m(1+\epsilon))dF>0.
    $$Because $t'$ is more difficult than $t$, we have $\E[\theta\vert t,\theta\in[0,\overline{\theta}]]<\E[\theta\vert t',\theta\in[0,\overline{\theta}]]$. Moreover, we have that for $\epsilon$ small enough, $\E[\theta\vert t,\theta\in[0,\overline{\theta}]]>\E[\theta\vert t](1+\epsilon)$. Combining these facts, we get $$
    \frac{\int_0^{\overline{\theta}}\pi_{t'}(\theta)\theta dF}{\int_0^{\overline{\theta}}\pi_{t'}(\theta) dF}>\frac{\int_0^{\overline{\theta}}\pi_{t}(\theta)\theta dF}{\int_0^{\overline{\theta}}\pi_{t}(\theta) dF}>\frac{\int_{\underline{\theta}}^{\overline{\theta}}\pi_{t}(\theta)\theta dF}{\int_{\underline{\theta}}^{\overline{\theta}}\pi_{t}(\theta) dF}(1+\epsilon).
    $$Recalling that $m=\frac{\int_{\underline{\theta}}^{\overline{\theta}}\pi_{t}(\theta)\theta dF}{\int_{\underline{\theta}}^{\overline{\theta}}\pi_{t}(\theta) dF}$, this is what we needed to show.
\end{proof}

\subsection{Claim of equilibrium existence in \cref{sec_wage}}\label{app:example}

If $D=[0,1]$, then any equilibrium has $d=1$ and the equilibrium wage is $$
w^*=\frac{\int_0^1\theta^1\theta d\theta}{\int_0^1\theta^1 d\theta}=\frac{2}{3}.
$$

Take any deviation to $d< 1$ and $m(h)\leq 2/3$ (by the negative selection, any wage higher would result in negative profits). 

The cutoff type choosing the deviating firm is $$
\theta^*(d,w)\text{ solving }(\theta^*)^d w=\theta^* \frac{2}{3}\;\Rightarrow\; \theta^*(d,w)=\left(\frac{3w}{2}\right)^{1/(1-d)}.
$$
Given that $w\leq 2/3$, $\theta^*(d,w)\in [0,1]$. Any deviating firm's payoffs are $$
\int_0^{\theta^*(d,w)}\theta^d(\theta-w)d\theta.
$$Calculating this integral, we obtain$$
(\theta^*)^{d+1}\left(\frac{\theta^*}{d+2}-\frac{w}{d+1}\right).
$$We can use the identity $w=\frac{2}{3}(\theta^*)^{1-d}$, to get$$
(\theta^*)^{d+2}\left(\frac{1}{d+2}-\frac{2}{3(d+1)}(\theta^*)^{-d}\right).
$$Using that $\theta^*\leq 1$ and $d< 1$, it is easy to verify that this expression is negative and therefore no deviation is profitable.

\section{Isocost tests with difficulty comparison}\label{app_cost}

For a given test $t$, let $\overline{\pi}_t=\int_\Theta \pi_t(\theta)dF$. We define the cost as follows. Let $f_{th}(\theta)=f(\theta)\frac{\pi_t(\theta)}{\overline{\pi}_t}$ and $f_{tl}(\theta)=f(\theta)\frac{1-\pi_t(\theta)}{1-\overline{\pi}_t}$. I assume that the cost associated with test $t$ is posterior separable \citep{caplin_et_al2022}:$$
C(t)=\overline{\pi}_t c(f_{th})+(1-\overline{\pi}_t)c(f_{tl}),
$$where $c:\Delta \Theta\rightarrow \reals$ is a strictly convex and continuous function.\footnote{I endow the set $\Delta\Theta$ with the weak* topology. In this topology, a sequence $(F_n)_n$ converges to $F$ if for any continuous $\phi:\Theta\rightarrow\reals$, $\int\phi(\theta)dF_n(\theta)\rightarrow\int\phi(\theta)dF$.} This class of cost function includes many commonly used cost functions such as mutual information cost or log-likelihood ratio cost \citep{SIMS2003,pomatto_et_al2023}. 

\begin{proposition}
    Let $\pi_t(\theta)\in (0,1)$ for all $\theta\in \Theta$. Then there exists a test $t'\neq t$ with $t\succeq_d t'$ and $C(t)=C(t')$.
\end{proposition}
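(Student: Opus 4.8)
The plan is to exhibit $d$ as a member of the two-parameter family of tests already constructed in the proof of \cref{prop_cost}, tuned so that its cost matches $C(t)$ exactly, and to locate it by an intermediate value argument along a path of tests that are all strictly easier than $t$. Throughout I assume $t$ is informative, i.e. $\pi_t$ non-constant, equivalently $f_{th}\neq f_{tl}$; if $\pi_t$ is constant then $t$ is uninformative and the statement is degenerate, since every uninformative test has cost $c(f)=C(t)$. Recall the family
\[
\pi_{d_{\lambda,\mu}}(\theta)=\frac{\lambda\overline{\pi}_t+(1-\lambda)\pi_t(\theta)}{(1-\mu)\bigl(1-\lambda(1-\overline{\pi}_t)\bigr)+\mu\overline{\pi}_t},
\]
with $(\lambda,\mu)$ ranging over $[0,1)\times[0,\overline{\mu})$, $\overline{\mu}:=\frac{1-\pi_t(\overline{\theta})}{1-\overline{\pi}_t}$, which is exactly the range keeping $\pi_{d_{\lambda,\mu}}$ interior. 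As verified in the proof of \cref{prop_cost}, one has $t\succeq_d d_{\lambda,\mu}$ always, the comparison is strict whenever $\lambda>0$ or $\mu>0$, and $d_{0,0}=t$.

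First I would pin down how the cost moves along the two axes through $(0,0)$. Along $\mu=0$ the proof of \cref{prop_cost} already gives $C(d_{\lambda,0})<C(t)$ for every $\lambda\in(0,1)$, which I take as given. The new ingredient is the opposite inequality along $\lambda=0$: I claim $C(d_{0,\mu})>C(t)$ for every $\mu\in(0,\overline{\mu})$. To prove it I would use the Bayes identities attached to the family, namely $f_{d_{0,\mu}h}=f_{th}$, $\,f_{tl}=\mu f+(1-\mu)f_{d_{0,\mu}l}$, and $f=\overline{\pi}_t f_{th}+(1-\overline{\pi}_t)f_{tl}$. Eliminating $f$ from the last two gives $f_{tl}=\alpha f_{th}+(1-\alpha)f_{d_{0,\mu}l}$ with $\alpha=\frac{\mu\overline{\pi}_t}{1-\mu(1-\overline{\pi}_t)}\in(0,1)$, and $f_{th}\neq f_{d_{0,\mu}l}$ because $f_{th}\neq f_{tl}$. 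Strict convexity of $c$ then yields $c(f_{tl})<\alpha c(f_{th})+(1-\alpha)c(f_{d_{0,\mu}l})$; substituting this into $C(t)=\overline{\pi}_t c(f_{th})+(1-\overline{\pi}_t)c(f_{tl})$ and collecting terms, the coefficient of $c(f_{th})$ collapses to $\overline{\pi}_t+(1-\overline{\pi}_t)\alpha=\overline{\pi}_{d_{0,\mu}}$ and the coefficient of $c(f_{d_{0,\mu}l})$ to $(1-\overline{\pi}_t)(1-\alpha)=1-\overline{\pi}_{d_{0,\mu}}$, so the right-hand side is precisely $C(d_{0,\mu})$. Hence $C(t)<C(d_{0,\mu})$.

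Finally I would run the intermediate value argument. Fix $\lambda_0\in(0,1)$ and $\mu_0\in(0,\overline{\mu})$ small enough that $d_{\lambda_0,0}$, $d_{0,\mu_0}$ and every test on the segment $s\mapsto d_{s\lambda_0,(1-s)\mu_0}$, $s\in[0,1]$, are interior. On this segment at least one of the two parameters is strictly positive at every $s$, so each test is strictly easier than $t$; and $s\mapsto C(d_{s\lambda_0,(1-s)\mu_0})$ is continuous because $s\mapsto(\overline{\pi}_d,f_{dh},f_{dl})$ is continuous in the weak* topology and $c$ is continuous. At $s=1$ the value equals $C(d_{\lambda_0,0})<C(t)$ and at $s=0$ it equals $C(d_{0,\mu_0})>C(t)$, so by the intermediate value theorem there is $s^{*}\in(0,1)$ with $C(d_{s^{*}\lambda_0,(1-s^{*})\mu_0})=C(t)$; set $d:=d_{s^{*}\lambda_0,(1-s^{*})\mu_0}$, which satisfies $t\succ_d d$ and $C(d)=C(t)$.

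The step I expect to carry the real content is the cost monotonicity in the $\mu$-direction — producing a strictly easier test with strictly \emph{larger} cost, the mirror image of what the proof of \cref{prop_cost} supplies. It reduces cleanly to strict convexity once the three Bayes identities are written down and one observes that the mixing weights recombine into $\overline{\pi}_{d_{0,\mu}}$; the only care needed is the bookkeeping that keeps $\alpha\in(0,1)$ and guarantees $f_{th}\neq f_{d_{0,\mu}l}$. The remaining delicate points are boundary ones: $\lambda=1$ and $\mu=\overline{\mu}$ produce non-interior tests of infinite cost, so the connecting path must be kept strictly inside the admissible region, and the uninformative $t$ must be split off at the start.
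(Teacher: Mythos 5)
Your proof is correct and is essentially the paper's own argument: the same two-parameter family $d_{\lambda,\mu}$, the same strict-convexity computation showing $C(d_{0,\mu})>C(t)$ (your version, writing $f_{tl}$ as a convex combination of $f_{th}$ and $f_{dl}$ with weights that recombine into $\overline{\pi}_{d_{0,\mu}}$, is the same inequality the paper derives), and an intermediate value argument in the parameters. The only immaterial difference is the path used for the IVT: the paper fixes $\mu>0$ and lets $\lambda$ run to the uninformative endpoint $\lambda=1$, where the cost drops below $C(t)$, whereas you connect $(0,\mu_0)$ to $(\lambda_0,0)$ and reuse the inequality $C(d_{\lambda,0})<C(t)$ established in the proof of \cref{prop_cost}.
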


\begin{proof}
    Let $\mu\in (0,\frac{1-\pi_t(\overline{\theta})}{1-\overline{\pi}_t})$, $\lambda\in [0,1]$ and $\pi_{t'}(\theta)=\frac{\lambda\overline{\pi}_t+(1-\lambda)\pi_t(\theta)}{(1-\mu)(1-\lambda(1-\overline{\pi}_t))+\mu\overline{\pi}_t}$. Suppose there is a symmetric equilibrium $s=(t,\alpha)$ with $\int_\Theta\theta \pi_t(\theta)dF>0$. Let $\mu\in [0,\frac{1-\pi_t(\overline{\theta})}{1-\overline{\pi}_t})$ (such a $\mu$ exists because $\pi_t(\overline{\theta})<1$). Define the test $t'$ as follows: $$
\pi_{t'}(\theta)=\frac{\lambda\overline{\pi}_t+(1-\lambda)\pi_t(\theta)}{(1-\mu)(1-\lambda(1-\overline{\pi}_t))+\mu\overline{\pi}_t}.
    $$One can check that with the assumption on $\mu$, we always have $\pi_{t'}(\theta)\leq 1$. 
    
    Monotonicity of $\pi_{t'}$ follows from the definition and the monotonicity of $\pi_t$.

    Let $\overline{\pi}_{t'}=\int_\Theta\pi_{t'}(\theta)dF$. Under our assumption on $\mu$, we get $\overline{\pi}_{t'}=\frac{\overline{\pi}_t}{(1-\mu)(1-\lambda(1-\overline{\pi}_t))+\mu\overline{\pi}_t}$. Note as well that for all $\theta\in \Theta$, we have\begin{align*}
        &\lambda f(\theta)+(1-\lambda)f_{th}(\theta)=f_{t'h}(\theta),\\
        &\mu f(\theta)+(1-\mu)f_{t'l}(\theta)=f_{tl}(\theta).
    \end{align*}
    These expressions can be easily verified by plugging in the values of $\pi_{t'}$ and $\overline{\pi}_{t'}$.

    This implies that \begin{align*}
        &\lambda+(1-\lambda)\frac{\pi_t(\theta)}{\overline{\pi}_t}=\frac{\pi_{t'}(\theta)}{\overline{\pi}_{t'}}\\
        \text{and }&\mu+(1-\mu)\frac{1-\pi_{t'}(\theta)}{1-\overline{\pi}_{t'}}=\frac{1-\pi_t(\theta)}{1-\overline{\pi}_t}.
    \end{align*}If $\pi_{t'}(\theta)\in (0,1)$, one can check from these expressions that $\frac{\pi_t(\theta)}{\pi_{t'}(\theta)}$ and $\frac{1-\pi_t(\theta)}{1-\pi_{t'}(\theta)}$ are increasing using that $\pi_{t'}$ is increasing. Therefore, $t\succeq_d t'$.

    We want to show there exists $\lambda\in (0,1)$ such that $$
    \overline{\pi}_{t'} c(f_{t'h})+(1-\overline{\pi}_{t'})c(f_{t'l})=\overline{\pi}_t c(f_{th})+(1-\overline{\pi}_t)c(f_{tl}).
    $$Note that if $\lambda=1$, then $\pi_{t'}(\theta)=1$ for all $\theta$ and $f_{t'h}=f$. Therefore, $t'$ is uninformative and $C(t')<C(t)$. If on the other hand, $\lambda=0$, we obtain $\overline{\pi}_{t'}=\frac{\overline{\pi}_t}{1-\mu+\mu \overline{\pi}_t}$ and $f_{t'h}=f_{th}$. To apply the intermediate value theorem and prove our claim, we want to show that $$
    \frac{\overline{\pi}_t}{1-\mu+\mu \overline{\pi}_t}c(f_{th})+(1-\frac{\overline{\pi}_t}{1-\mu+\mu \overline{\pi}_t})c(f_{t'l})>\overline{\pi}_tc(f_{th})+(1-\overline{\pi}_t)c(f_{tl}).
    $$We can use the fact that $c$ is convex and $f_{tl}=\mu f+(1-\mu)f_{t'l}$, to strengthen this inequality to $$
    \frac{\overline{\pi}_t}{1-\mu+\mu \overline{\pi}_t}c(f_{th})+(1-\frac{\overline{\pi}_t}{1-\mu+\mu \overline{\pi}_t})c(f_{t'l})>\overline{\pi}_tc(f_{th})+(1-\overline{\pi}_t)(\mu c(f)+(1-\mu)c(f_{t'l})).
    $$Using that $c(f)=0$ and rearranging, we obtain$$
    \overline{\pi}_tc(f_{th})+(1-\mu)(1-\overline{\pi}_t)c(f_{t'l})>0,
    $$which is satisfied. Therefore, by continuity of $c$, there is $\lambda\in (0,1)$ that delivers $C(t')=C(t)$.
\end{proof}

\end{document}